\documentclass{article}%
\pdfoutput=1
\usepackage{amsfonts}
\usepackage{amssymb}
\usepackage[nosumlimits,nointlimits,nonamelimits]{amsmath}
\usepackage[onehalfspacing]{setspace}
\usepackage{footmisc}
\usepackage{graphicx}%
\setcounter{MaxMatrixCols}{30}
\providecommand{\U}[1]{\protect\rule{.1in}{.1in}}
\setlength{\oddsidemargin}{.25in}
\setlength{\textwidth}{6.0in}
\setlength{\textheight}{8.0in}

\newtheorem{theorem}{Theorem}[section]

\newtheorem{corollary}{Corollary}[section]

\newtheorem{definition}{Definition}[section]

\newtheorem{lemma}{Lemma}[section]

\newtheorem{proposition}{Proposition}[section]
\newtheorem{remark}{Remark}[section]

\newenvironment{proof}[1][Proof]{\noindent\textbf{#1.} }{\ \rule{0.5em}{0.5em}}
\begin{document}

\title{A Classification of Hidden-Variable Properties\thanks{We are indebted to
Shelly Goldstein for detailed and valuable comments on an earlier
draft. \ Our thanks to Amanda Friedenberg, Gus Stuart, and three anonymous referees for important input,
and also to Michael Giangrasso, Peter Morgan, Alex Peysakhovich, and Jean de Valpine.\ \
Financial support from the Stern School of Business is gratefully
acknowledged. \ {\tiny chvp-12-03-08}}}
\author{Adam Brandenburger\thanks{Address: Stern School of Business, New York
University, 44 West Fourth Street, New York, NY 10012,
adam.brandenburger@stern.nyu.edu, www.stern.nyu.edu/$\sim$abranden}
\and Noson Yanofsky\thanks{Address: Department of Computer and Information Science,
Brooklyn College, 2900 Bedford Avenue, Brooklyn, NY 11210,
noson@sci.brooklyn.cuny.edu, www.sci.brooklyn.cuny.edu/$\sim$noson}}
\date{First Version 01/04/07\\
Publication Version 09/26/08} \maketitle

\begin{abstract}
Hidden variables are extra components
added to try to banish counterintuitive features of quantum mechanics. We
start with a quantum-mechanical model and describe various properties that can
be asked of a hidden-variable model. We
present six such properties and a Venn diagram of how they are
related. With two existence theorems and three no-go theorems (EPR, Bell,
and Kochen-Specker), we show which properties of empirically equivalent
hidden-variable models are possible and which are not. Formally, our
treatment relies only on classical probability models, and physical
phenomena are used only to motivate which models to choose.
\end{abstract}

\section{Introduction}

\thispagestyle{empty}Begun by von Neumann \cite[1932]{vonneumann32}, the
hidden-variable program in quantum mechanics (QM) adds extra
(\textquotedblleft hidden\textquotedblright) ingredients in order to try to
banish some of the counterintuitive features of QM. \ These features are: (i)
the probabilistic nature of quantum behavior, (ii) the possibility of
so-called non-local effects between widely separated particles, and (iii) the
idea of an intrinsic dependence between the observer of a QM system and the
properties of the system itself.

Hidden-variable theories aim to remove these strange aspects of QM by building
more \textquotedblleft complete\textquotedblright\ models (in the terminology
of Einstein-Podolsky-Rosen \cite[1935]{einstein-podolsky-rosen35}). \ The
completed models should agree with the predictions of QM, but exhibit one or
more of the desired properties of: (i) determinism, (ii) locality, and (iii)
independence.

Can such models actually be built? \ The famous \textquotedblleft
no-go\textquotedblright\ theorems of QM show that there are severe limitations
to what can be done. \ But it is also true that certain combinations of
properties are possible.

Our modest goal in this paper is to provide a formal framework in
which various properties one might ask of hidden-variable models can
be stated and in which various non-existence and existence results
can be organized. \ Almost all--if not all--of the ingredients of
what we do in this paper are well known to researchers in the area.
\ Our contribution, we hope, is in
putting all the ingredients into one simple setting.

The setting is classical probability spaces. The question is, given
a classical probability model, whether there exists an associated
hidden-variable model that is empirically equivalent to the first model
and that satisfies certain properties. These properties are
motivated by the literature on hidden variables in QM. The specific
properties we consider--and the relationships among them--can be
depicted in the Venn diagram of Figure 1.1. (We define all the terms
later.) The diagram contains 21 regions.

\[%
{\parbox[b]{5.5936in}{\begin{center}
\includegraphics
[trim=0.000000in 0.000000in -0.381536in 0.000000in,
natheight=5.010700in,
natwidth=8.083400in,
height=3.3217in,
width=5.5936in
]{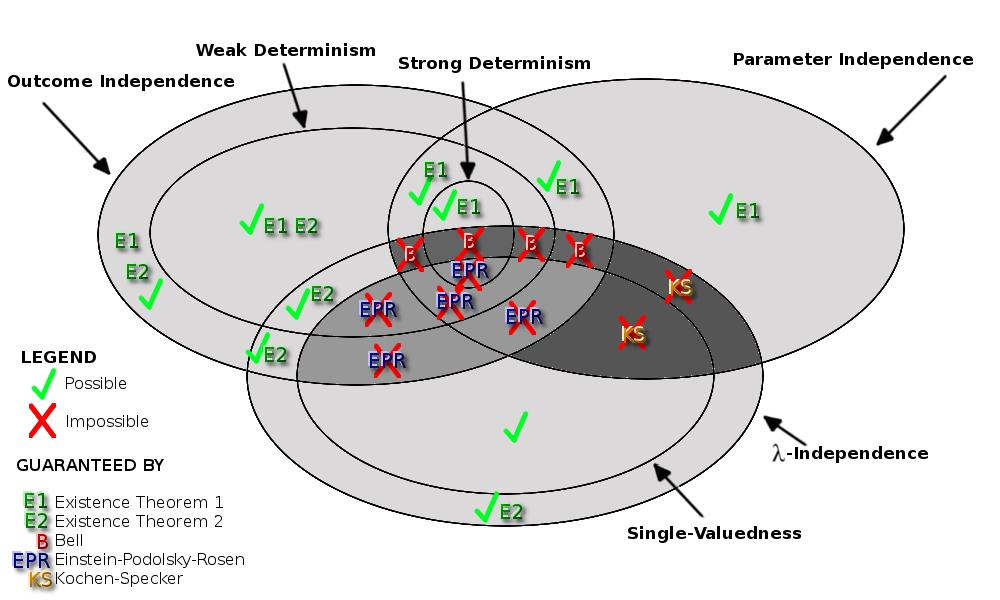}%
\\
Figure 1.1
\end{center}}}%
\]

The main result of the paper is that we can give a complete account
of these 21 regions. For 10 of these regions (indicated with
checks), it is always possible to find an equivalent hidden-variable
model with the properties in question. For the remaining 11 regions
(indicated with crosses), this may not be possible. We fill in the
regions via two existence results and three non-existence results.
The latter three are the famous theorems of Einstein-Podolsky-Rosen
(EPR) \cite[1935]{einstein-podolsky-rosen35},\footnote{Strictly
speaking, EPR did not state a non-existence theorem, but it is
useful to present their argument this way.} Bell
\cite[1964]{bell64}, and Kochen-Specker
\cite[1967]{kochen-specker67}.

It is important to understand that, formally, our paper makes no use
of physical phenomena. It is an exercise in classical probability
theory alone. Of course, the probability spaces we select for the
non-existence results are inspired by the physical experiments
described in EPR, Bell, and Kocher-Specker. But we hope it is
conceptually clarifying to present the hidden-variable question in
a purely abstract setting--that is, to show how much follows from
the rules of probability theory alone.

Naturally, our account of hidden-variable theory is complete only
relative to the properties we consider (there are six of them).
These properties are, as far as we can tell, the main ones
considered in the literature. (As we explain later, we have added
one definition.) In particular, Bell Locality (\cite[1964]{bell64})
is equivalent to the conjunction of Outcome and Parameter
Independence (Jarrett \cite[1984]{jarrett84}, stated here as
Proposition 2.1). Kochen-Specker \cite[1967]{kochen-specker67}
Non-Contextuality is implied by the conjunction of Parameter
Independence and $\lambda$-Independence (Proposition 2.2). But there
may well be other interesting properties to put on hidden-variable
models, which would lead to an extension of Figure 1.1.

There are, of course, many treatments of the hidden-variable
question. (Prominent examples include Belinfante
\cite[1973]{belinfante}, Gudder \cite[1988]{gudder88},
Mermin \cite[1993]{mermin93}, Peres
\cite[1990]{peres90}, and van Fraassen
\cite[1991]{vanfraassen}.) This paper is not meant to be a
comprehensive survey. Our goal is, in a sense, the reverse. It is to
start with the rules of probability theory alone and ask--relative
to the six properties we consider--what is or is not possible. For
this task, we need only EPR, Bell, and Kochen-Specker. But we do
mention later some other no-go results in QM (not needed to complete
Figure 1.1).

Two comments on the particular framework we present.  First, we work
with a single probability measure on a single space, where points in
the space describe measurements on particles and outcomes of those
measurements.  An alternative--more conventional--approach would be to
use a family of probability measures on a space describing outcomes
only, with different probability measures corresponding to different
measurements.\footnote{We are grateful to Shelly Goldstein and a
referee who both pointed to this issue.}  In fact, all our
requirements are stated in terms of conditional probabilities: If
such-and-such a measurement is made, then what is the probability of
a certain outcome?  The distinction between the approaches might
therefore seem small--both involve families of probability measures.
But it matters.  If we had started from a family of probability
measures rather than a single measure, we would not have been able
to derive all the relationships between properties shown in Figure
1.1 without making some additional assumptions.\footnote{A reader
preferring the more conventional approach can simply make these
additional assumptions. We give details in the next section.}  So,
formally, our approach is more parsimonious.  Yet, it does add an
ingredient at the conceptual level--viz., the existence of a
probability measure prior to conditioning on measurements.  This
measure may be thought of as representing the perspective of a
"super-observer" who observes the experimenters as well as the
outcomes of the experiments.  Does the existence of such a measure
contradict the free will of an experimenter in deciding what
measurements to make? We don't think so--because, as we said, we work
only with the conditionals.  Still, even if it plays a very small
role in our treatment, the idea of such a measure seems deserving of
further study.  We leave this as beyond the scope of the current
paper.

A second choice we make in our framework is to treat only finite
probability spaces.  This involves a tradeoff.  On the one hand,
finiteness allows us to avoid all measure-theoretic issues. On the
other hand, as an assumption on the space in which a hidden variable
lives, finiteness is undoubtedly restrictive. To be precise, the
first of our two existence theorems needs only a finite space in any
case, but, under finiteness, the second can treat only rational
probabilities. (We sketch the extension of our second theorem, using
an infinite space, to all probabilities.) Of course, finiteness
makes our versions of the no-go theorems weaker.

We derive Figure 1.1 in the body of the paper. Before that, though,
let us offer a comment on its conceptual meaning in QM. The main
message of the no-go theorems is that in building a hidden-variable
theory, some properties that might be viewed as desirable--at least,
a priori--have to be given up. But there is a choice of what to give
up. Arguably, it is more a matter of metaphysics than physics as to
what choice to make. The point of a formal treatment--as in Figure
1.1--is to give a precise statement of what the options are. There
is a basic three-way tradeoff. We can have:

\begin{enumerate}
\item[(i)] \textbf{Determinism} \ (As we will explain, this comes in a strong
or a weak form.) \ This says that randomness reflects only observer ignorance.
\ Once hidden variables are introduced, there is no residual randomness in the universe.

\item[(ii)] \textbf{Parameter Independence} \ This says that when conducting
an experiment on a system of particles, the outcome of a measurement on one of
the particles does not depend on what measurements are performed on other
particles. \ (The intuitive appeal of this property is that often the
particles are widely separated.) \ This is a way of saying that the universe
is local.

\item[(iii)] \textbf{$\lambda$-Independence} \ This says that the nature of the
particles--as determined by the value of a hidden variable--does not depend on
the experiment conducted. \ There is, in this sense, no dependence between the observer and
the observed.
\end{enumerate}

Any one of these properties is consistent with the predictions of QM. \ So are
certain combinations of properties, as Figure 1.1 shows. \ But, however a
priori reasonable they may seem, we cannot have all three properties--or even
certain pairs of properties. \ There is an inherent tradeoff. \ This is an
inescapable feature of QM.

The rest of the paper is organized as follows. \ Section 2 lays out the framework
and basic definitions. \ Section 3 presents two existence theorems on
hidden-variable models. \ Sections 4-6 present EPR \cite[1935]{einstein-podolsky-rosen35},
Bell \cite[1964]{bell64}, and Kochen-Specker
\cite[1967]{kochen-specker67} in our probability-theoretic framework. Section 7 mentions some other impossibility
results in QM not covered in this paper.

\section{The Models and Their Properties}

Here is the hidden-variable question in a bit more detail. \ Start with a
model of an experiment done in QM. \ Sometimes, the experiment will consist of
measurements performed on several entangled\ particles. \ Or, the experiment
might involve several measurements performed on a single particle. \ The model
describes the set-up and outcome of the experiment and so will be called an
\textquotedblleft empirical model.\textquotedblright\ \ The question is
whether one can find a hidden-variable model--i.e., a model involving
additional (\textquotedblleft hidden\textquotedblright) variables--which is
empirically equivalent to the first model and which has desired properties.
\ By \textquotedblleft empirically equivalent\textquotedblright\ we mean that
the two models make the same (probabilistic) prediction about outcomes.

Formally, we consider a space%
\[
\Psi=\{a,a^{\prime},\ldots\}\times\{b,b^{\prime},\ldots\}\times\{c,c^{\prime
},\ldots\}\cdots\times\{A,A^{\prime},\ldots\}\times\{B,B^{\prime}%
,\ldots\}\times\{C,C^{\prime},\ldots\}\times\cdots.
\]
The variables $A,B,C$, \ldots\ are measurements, and the variables $a,b,c$,
\ldots\ are associated outcomes of measurements. \ There might be several
particles:\ Ann performs a measurement on her particle, Bob performs a
measurements on his particle, \ldots. \ Or, $\Psi$ might describe a case where
several measurements are performed on one particle. \ The definitions to come
apply in either case. \ We take each of the spaces in $\Psi$ to be finite, and
suppose that $\Psi$ is a finite product.

Let $\Lambda$ be a finite space in which a hidden variable $\lambda$
lives.\footnote{Throughout, we talk about one hidden variable. \ But
we put no structure on the space in which the hidden variable lives.
Of course, in the infinite case, a measurable structure would be
needed.} \ The overall space is then%
\[
\Omega=\Psi\times\Lambda.
\]

\begin{definition}
An \textbf{empirical model} is a pair $(\Psi,q)$, where $q$ is a probability
measure on $\Psi$. \ A \textbf{hidden-variable\ model} is a pair $(\Omega,p)$,
where $p$ is a probability measure on $\Omega$.
\end{definition}

Essentially, we employ the probability measures $q$ and $p$ once
conditioned on one or more measurements.  Still, as we said in the
Introduction, we cannot quite dispense with the unconditional $q$
and $p$, and work instead with a family of probability measures
indexed by measurements (one family for $q$ and one for $p$).  If we
did, we would lose Lemmas 2.1 and 2.4--and hence certain
relationships in Figure 1.1.  These relationships seem intuitively
correct to us, so we prefer a formalism in which they can be
derived.  With an indexed family of measures, we would have to
impose rather than derive these relationships.

The models of Definition 2.1 stand alone. However, we are also
interested in stating when different models are equivalent:

\begin{definition}
An empirical model $(\Psi,q)$ and a hidden-variable model $(\Omega,p)$ are
(\textbf{empirically})\textbf{\ equivalent} if for all $a,b,c$, \ldots, $A,B,C
$, \ldots,%
\[
q(A,B,C,\ldots)>0\text{ if and only if }p(A,B,C,\ldots)>0,
\]
and when both are non-zero,%
\[
q(a,b,c,\ldots|A,B,C,\ldots)=p(a,b,c,\ldots|A,B,C,\ldots).
\]

\end{definition}

Here,\textbf{\ }we write \textquotedblleft$a,b,c,\ldots$\textquotedblright\ as
a shorthand for the event%
\[
\{(a,b,c,\ldots)\}\times\{A,A^{\prime},\ldots\}\times\{B,B^{\prime}%
,\ldots\}\times\{C,C^{\prime},\ldots\}\times\cdots
\]
in $\Psi$, or the event%

\[
\{(a,b,c,\ldots)\}\times\{A,A^{\prime},\ldots\}\times\{B,B^{\prime}%
,\ldots\}\times\{C,C^{\prime},\ldots\}\times\cdots\times\Lambda
\]
in $\Omega$, and similarly for other expressions. \ We will adopt this
shorthand throughout.

The non-nullness condition is simply to ensure that any measurements
$(A,B,C,\ldots)$ which are possible in the empirical model are also
possible in the hidden-variable model under consideration, and vice
versa. \ Without this condition, it would be hard to compare the two
models.

We will often calculate $p(a,b,c,\ldots|A,B,C,\ldots)$, for $p(A,B,C,\ldots
)>0$, from the formula:%
\[
p(a,b,c,\ldots|A,B,C,\ldots)=\sum\limits_{\{\lambda:p(A,B,C,\ldots
,\lambda)>0\}}p(a,b,c,\ldots|A,B,C,\ldots,\lambda)p(\lambda|A,B,C,\ldots).
\]
Substituting this in Definition 2.2, we see that the idea of equivalence is to
reproduce a given probability measure $q$ on the space $\Psi$ by
averaging under a probability measure $p$ on an augmented space $\Omega$,
where $\Omega$ includes a hidden variable. \ The measure $p$ is then subject
to various conditions (described below).

One more basic definition:

\begin{definition}
Two hidden-variable models $(\Omega,p)$ and $(\Omega,\overline{p})$ are
(\textbf{empirically})\textbf{\ equivalent} if for all $a,b,c$, \ldots, $A,B,C
$, \ldots,%
\[
p(A,B,C,\ldots)>0\text{ if and only if }\overline{p}(A,B,C,\ldots)>0,
\]
and when both are non-zero,%
\[
p(a,b,c,\ldots|A,B,C,\ldots)=\overline{p}(a,b,c,\ldots|A,B,C,\ldots).
\]

\end{definition}

Now, we move on to the different properties of hidden-variable models.
\ Figure 2.1 repeats Figure 1.1 (without the checks and crosses), as a preview
of the properties and relationships we will consider.%
\[%
{\parbox[b]{5.5936in}{\begin{center}
\includegraphics[
trim=0.000000in 0.000000in -0.381536in 0.000000in,
natheight=5.010700in,
natwidth=8.083400in,
height=3.3217in,
width=5.5936in
]%
{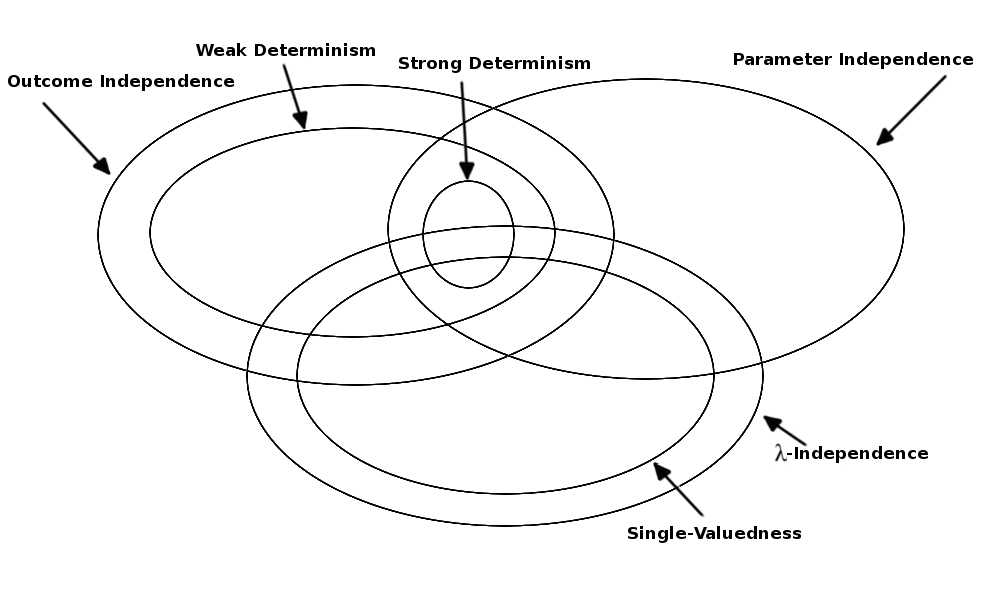}%
\\
Figure 2.1
\end{center}}}%
\]

\begin{definition}
A hidden-variable model $(\Omega,p)$ satisfies \textbf{Single-Valuedness} if
$\Lambda$ is a singleton.
\end{definition}

This condition says that the hidden variable can take on only one value. \ In
effect, this condition doesn't allow hidden variables. \ We include it because
EPR will be usefully formulated this way.

\begin{definition}
A hidden-variable model $(\Omega,p)$ satisfies
\textbf{$\lambda$-Independence} if for all $A,A',B,B',C,C'$, $\ldots$,
$\lambda$, whenever
\[
p(A,B,C, \ldots)> 0\mbox{ and } p(A',B',C', \ldots) > 0,
\]
then
\[
p(\lambda|A,B,C,\ldots)=p(\lambda |A',B',C',\ldots).
\]
\end{definition}

(This term is from Dickson \cite[2005, p.140]{dickson}.)
The condition says that the process determining the value of the hidden variable is independent of which measurements are chosen.

\begin{remark}
If a hidden-variable model satisfies Single-Valuedness, then it satisfies $\lambda$-Independence.
\end{remark}

\begin{definition}
A hidden-variable model $(\Omega,p)$ satisfies \textbf{Strong Determinism} if,
for every $A,\lambda$, whenever $p(A,\lambda)>0$, there is an $a$ such that
$p(a|A,\lambda)=$ $1$, and similarly for $B,\lambda,b$, etc.
\end{definition}

\begin{definition}
A hidden-variable model $(\Omega,p)$ satisfies \textbf{Weak Determinism} if,
for every $A,B,C$, $\ldots,$ $\lambda$, whenever $p(A,B,C,\ldots,\lambda)>0$,
there is a tuple $a,b,c$, \ldots\ such that $p(a,b,c,\ldots|A,B,C,\ldots
,\lambda)=$ $1$.
\end{definition}

Determinism is a basic condition in the literature. \ But we are careful to
make a distinction between a strong and weak form. \ We will see that various
results are true for one form but false for another. \ Broadly, the condition
is that the hidden variable determines (almost surely) the outcomes of
measurements. \ But, Strong Determinism says this holds
measurement-by-measurement, while Weak Determinism says this holds only once
all measurements are specified. \ There is a one-way implication:

\begin{lemma}
If a hidden-variable model satisfies Strong Determinism, then it satisfies
Weak Determinism.
\end{lemma}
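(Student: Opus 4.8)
The plan is to fix an arbitrary tuple of measurements $A,B,C,\ldots$ together with a hidden-variable value $\lambda$ satisfying $p(A,B,C,\ldots,\lambda)>0$, and then to build the single outcome-tuple $a,b,c,\ldots$ required by Weak Determinism by reading off its coordinates one measurement at a time from Strong Determinism. First I would observe that, in the shorthand of this section, the event $\{A,B,C,\ldots,\lambda\}$ is contained in $\{A,\lambda\}$, so monotonicity of $p$ gives $p(A,\lambda)\geq p(A,B,C,\ldots,\lambda)>0$, and likewise $p(B,\lambda)>0$, $p(C,\lambda)>0$, and so on. Strong Determinism then applies to each measurement separately and supplies outcomes $a,b,c,\ldots$ with $p(a\mid A,\lambda)=1$, $p(b\mid B,\lambda)=1$, and so forth. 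The goal is to show that this same tuple satisfies $p(a,b,c,\ldots\mid A,B,C,\ldots,\lambda)=1$.

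The key step is to upgrade each single-measurement conditional from conditioning on $(A,\lambda)$ alone to conditioning on the full measurement tuple $(A,B,C,\ldots,\lambda)$. From $p(a\mid A,\lambda)=1$ and $p(A,\lambda)>0$ I get that the event ``the $A$-outcome is not $a$, the measurement $A$ is made, and the hidden variable is $\lambda$'' has probability zero. Since the event ``the $A$-outcome is not $a$, all of $A,B,C,\ldots$ are made, and the hidden variable is $\lambda$'' is contained in the former, it too has probability zero by monotonicity. Dividing by the strictly positive number $p(A,B,C,\ldots,\lambda)$ then yields $p(a\mid A,B,C,\ldots,\lambda)=1$, and the identical argument gives $p(b\mid A,B,C,\ldots,\lambda)=1$, $p(c\mid A,B,C,\ldots,\lambda)=1$, and so on. I expect this lifting step to be the only real content of the proof: a priori, enlarging the conditioning event could change probabilities, and what rescues the argument is precisely that the ``bad'' numerator is squeezed to zero by containment while the denominator remains positive.

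Finally I would assemble the coordinates into the joint statement. The event that the full outcome-tuple differs from $(a,b,c,\ldots)$ is the union of the events that some single coordinate fails, so by finite subadditivity its conditional probability given $(A,B,C,\ldots,\lambda)$ is at most the sum $p(\text{$A$-outcome}\neq a\mid A,B,C,\ldots,\lambda)+p(\text{$B$-outcome}\neq b\mid A,B,C,\ldots,\lambda)+\cdots$, each term of which is zero by the previous paragraph. Hence $p(a,b,c,\ldots\mid A,B,C,\ldots,\lambda)=1$, which is exactly Weak Determinism for the chosen measurements and $\lambda$. Since $A,B,C,\ldots$ and $\lambda$ were arbitrary subject only to the positivity hypothesis, this completes the argument.
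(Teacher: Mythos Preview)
Your argument is correct and follows essentially the same route as the paper: both start by using $p(A,B,C,\ldots,\lambda)>0$ and monotonicity to get $p(A,\lambda)>0$, etc., invoke Strong Determinism coordinatewise, and then combine the resulting probability-one statements. The paper compresses your last two paragraphs into a single cited ``easy fact'' (if $p(E_i\mid F_i)=1$ for all $i$ and $p(\bigcap_i F_i)>0$, then $p(\bigcap_i E_i\mid \bigcap_i F_i)=1$), whereas you unpack and prove that fact via the zero-numerator/positive-denominator containment argument and the union bound; the content is the same.
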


\begin{proof}
Suppose $p(A,B,C,\ldots,\lambda)>0$. \ Then $p(A,\lambda)>0,p(B,\lambda
)>0,p(C,\lambda)>0$, \ldots. \ So, there are $a,b,c$, \ldots, such that
$p(a|A,\lambda)=$ $1,p(b|B,\lambda)=$ $1,p(c|C,\lambda)=$ $1$, \ldots.

The result now follows from the following easy fact in probability
theory:\ Let $E_{1},\ldots,E_{n}$ and $F_{1},\ldots,F_{n}$ be events with
$p(\bigcap_{i}F_{i})>0$. \ If $p(E_{i}|F_{i})=1$ for all $i$, then
$p(\bigcap_{i}E_{i}|\bigcap_{i}F_{i})=1$.
\end{proof}

\begin{definition}
A hidden-variable model $(\Omega,p)$ satisfies \textbf{Outcome Independence}
if for all $a,b,c$, \ldots, $A,B,C$,\ldots, $\lambda$, whenever
$p(A,B,C,\ldots,b,c,\ldots,\lambda)>0$,%
\begin{equation}
p(a|A,B,C,\ldots,b,c,\ldots,\lambda)=p(a|A,B,C,\ldots,\lambda), \tag{2.1}%
\end{equation}
and similarly with $a$ and $b$ interchanged, etc.
\end{definition}

Outcome Independence is taken from Jarrett \cite[1984]{jarrett84} and Shimony
\cite[1986]{shimony86}. \ It says that conditional on the value of the hidden
variable and the measurements undertaken, the outcome of any one measurement
is (probabilistically) unaffected by the outcomes of the other measurements.

\begin{lemma}
A hidden-variable model $(\Omega,p)$ satisfies Outcome Independence if and
only if for all $a,b,c$, \ldots, $A,B,C$,\ldots, $\lambda$, whenever
$p(A,B,C,\ldots,\lambda)>0$,%
\begin{equation}
p(a,b,c,\ldots|A,B,C,\ldots,\lambda)=p(a|A,B,C,\ldots,\lambda)\times
p(b|A,B,C,\ldots,\lambda)\times p(c|A,B,C,\ldots,\lambda)\times\cdots.
\tag{2.2}%
\end{equation}

\end{lemma}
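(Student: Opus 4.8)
The plan is to prove the two implications separately, working throughout in the conditional measure obtained by fixing the measurements $A,B,C,\ldots$ and the hidden-variable value $\lambda$ on an event of positive $p$-probability, with $a,b,c,\ldots$ ranging over the outcome coordinates. First I would dispatch the direction $(2.2)\Rightarrow$ Outcome Independence, which is the routine one. Assuming the factorization $(2.2)$, I would expand the left-hand side of $(2.1)$ by Bayes' rule as
\[
p(a\mid A,B,C,\ldots,b,c,\ldots,\lambda)=\frac{p(a,b,c,\ldots\mid A,B,C,\ldots,\lambda)}{p(b,c,\ldots\mid A,B,C,\ldots,\lambda)},
\]
which is well defined since the hypothesis $p(A,B,C,\ldots,b,c,\ldots,\lambda)>0$ forces the denominator to be positive. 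Summing $(2.2)$ over all values of $a$ and using $\sum_{a}p(a\mid A,B,C,\ldots,\lambda)=1$ shows the denominator equals $p(b\mid A,B,C,\ldots,\lambda)\times p(c\mid A,B,C,\ldots,\lambda)\times\cdots$, so all factors except $p(a\mid A,B,C,\ldots,\lambda)$ cancel, giving exactly $(2.1)$. The symmetric statements follow by relabeling.

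The converse, Outcome Independence $\Rightarrow(2.2)$, is the substantive direction, and I would argue by the chain rule applied to the outcome coordinates. Writing
\[
p(a,b,c,\ldots\mid A,B,C,\ldots,\lambda)=p(a\mid b,c,\ldots,A,B,C,\ldots,\lambda)\cdot p(b,c,\ldots\mid A,B,C,\ldots,\lambda),
\]
I would peel off one outcome at a time, leaving a product of factors each of the form $p(\cdot\mid\text{remaining outcomes},A,B,C,\ldots,\lambda)$. The first factor is handled directly by $(2.1)$. The difficulty is that the later factors condition on only a proper subset of the remaining outcomes, whereas Definition 2.9 asserts independence of each outcome only from the conjunction of \emph{all} the others.

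Hence the crux, and the step I expect to be the main obstacle, is a marginalization sublemma: if one outcome is conditionally independent of the join of all the others, then it is independent of any subset of them. Concretely, I would show that $(2.1)$ for $b$ implies
\[
p(b\mid c,\ldots,A,B,C,\ldots,\lambda)=\sum_{a}p(b\mid a,c,\ldots,A,B,C,\ldots,\lambda)\,p(a\mid c,\ldots,A,B,C,\ldots,\lambda),
\]
where the sum runs over the $a$ of positive conditional probability; replacing each summand's first factor by $p(b\mid A,B,C,\ldots,\lambda)$ via $(2.1)$ and using that the weights sum to one collapses the right-hand side to $p(b\mid A,B,C,\ldots,\lambda)$. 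Iterating this sublemma lets me drop any set of outcome coordinates from the conditioning.

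With the sublemma in hand, every factor produced by the chain rule reduces to the corresponding marginal conditional $p(\cdot\mid A,B,C,\ldots,\lambda)$, and their product is precisely $(2.2)$. The only remaining care is bookkeeping with the positivity conditions, so that each conditional probability invoked is defined on an event of positive $p$-measure; this is ensured at each stage because $p(A,B,C,\ldots,\lambda)>0$ and the relevant outcome events inherit positive probability from the conditioning hypotheses. I expect this bookkeeping to be tedious but conceptually routine, leaving the marginalization sublemma as the single point where the argument has real content.
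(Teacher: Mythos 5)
Your argument is correct, but it is worth noting that the paper does not actually prove this lemma at all: its ``proof'' is the single line ``Standard; see, e.g., Chung [1974, Theorem 9.2.1],'' so what you have written is essentially a reconstruction of that standard textbook fact rather than a parallel to anything in the text. Your two directions are the right ones, and you have correctly located the only point of substance: the easy direction is division plus summing (2.2) over $a$, and the hard direction needs the marginalization sublemma showing that independence of one coordinate from \emph{all} the others implies independence from any \emph{subset} of the others, which you prove by averaging $p(b\mid a,c,\ldots,A,B,C,\ldots,\lambda)=p(b\mid A,B,C,\ldots,\lambda)$ against the conditional law of the dropped coordinate. Two small refinements: it is slightly cleaner to marginalize over all dropped coordinates jointly in one application of the law of total probability (restricting the sum to tuples of positive conditional probability) than to iterate one coordinate at a time; and the positivity ``bookkeeping'' you defer is not quite automatic, since (2.2) must hold for \emph{every} outcome tuple, including null ones where the chain rule is unavailable. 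But your own sublemma closes that case: if some partial joint $p(b,c,\ldots\mid A,B,C,\ldots,\lambda)$ vanishes, then working down the chain it shows some single marginal $p(\cdot\mid A,B,C,\ldots,\lambda)$ must vanish, so both sides of (2.2) are zero. With that one sentence added, the proof is complete.
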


\begin{proof}
Standard; see, e.g., Chung \cite[1974, Theorem 9.2.1]{chung74}.
\end{proof}

\begin{lemma}
[{Bub \cite[1997, p.69]{bub97}}]If a hidden-variable model satisfies Weak
Determinism, then it satisfies Outcome Independence.
\end{lemma}

\begin{proof}
Suppose $p(A,B,C, \ldots, \lambda)>0$. Then, by Weak Determinism, there is a tuple $a^*, b^*, c^*,
\ldots$ such that

$$
p(a,b,c, \ldots | A,B,C, \ldots, \lambda) = \chi_{\{a=a^*\}} \times \chi_{\{b=b^*\}} \times \chi_{\{c=c^*\}} \times \cdots .
$$

But then $p(a | A,B,C,\ldots, \lambda)= \chi_{\{a=a^*\}}$, $p(b | A,B,C,\ldots, \lambda)= \chi_{\{b=b^*\}}$,
$p(c | A,B,C,\ldots, \lambda)= \chi_{\{c=c^*\}}, \ldots .$ Now use Lemma 2.2.
\end{proof}

\begin{definition}
A hidden-variable model $(\Omega,p)$ satisfies \textbf{Parameter Independence}
if for all $a,A$,$B,C$, \ldots, $\lambda$, whenever $p(A,B,C,\ldots,\lambda)>0$,%
\begin{equation}
p(a|A,B,C,\ldots,\lambda)=p(a|A,\lambda), \tag{2.3}%
\end{equation}
and similarly for $b,A,B,C$, \ldots, $\lambda$, etc.
\end{definition}

Parameter Independence is also from Jarrett \cite[1984]{jarrett84} and Shimony
\cite[1986]{shimony86}. \ It says that, conditional on the value of the hidden
variable, the outcome of any one measurement depends (probabilistically) only
on that measurement and not on the other measurements.

\begin{lemma}
If a hidden-variable model satisfies Strong Determinism, then it satisfies
Parameter Independence.
\end{lemma}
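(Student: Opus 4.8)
The plan is to establish Parameter Independence directly, by showing that Strong Determinism forces both conditional outcome distributions appearing in (2.3) to be the \emph{same} point mass. I would not route the argument through Weak Determinism (Lemma 2.1), since that only controls $p(\cdot\mid A,B,C,\ldots,\lambda)$ and says nothing about the coarser conditional $p(\cdot\mid A,\lambda)$ that (2.3) requires; Strong Determinism is exactly what is needed to pin down the latter.

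First I would fix an outcome $a$, measurements $A,B,C,\ldots$, and a value $\lambda$ with $p(A,B,C,\ldots,\lambda)>0$. Because the event $\{A,\lambda\}$ contains the event $\{A,B,C,\ldots,\lambda\}$, monotonicity of $p$ gives $p(A,\lambda)\geq p(A,B,C,\ldots,\lambda)>0$, so $p(\cdot\mid A,\lambda)$ is well defined. Strong Determinism then supplies an outcome $a^{*}$ with $p(a^{*}\mid A,\lambda)=1$, i.e.\ $p(a\mid A,\lambda)=\chi_{\{a=a^{*}\}}$ for every $a$. This handles the right-hand side of (2.3).

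The key step is to verify that the \emph{same} point mass governs the finer conditioning. From $p(a^{*}\mid A,\lambda)=1$ I get $p(\{a\neq a^{*}\}\cap\{A,\lambda\})=0$. Since $\{a\neq a^{*}\}\cap\{A,B,C,\ldots,\lambda\}$ is a subset of $\{a\neq a^{*}\}\cap\{A,\lambda\}$, it too is null, and dividing by the positive quantity $p(A,B,C,\ldots,\lambda)$ yields $p(a\neq a^{*}\mid A,B,C,\ldots,\lambda)=0$. Hence $p(a\mid A,B,C,\ldots,\lambda)=\chi_{\{a=a^{*}\}}$ as well, and comparing with the previous display gives $p(a\mid A,B,C,\ldots,\lambda)=p(a\mid A,\lambda)$, which is (2.3). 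The argument for $b$, for $c$, and so on is identical after permuting the coordinates, so Parameter Independence holds in full.

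The only real content—and the step I would watch most carefully—is the monotonicity observation that an outcome fixed almost surely by conditioning on $(A,\lambda)$ stays fixed almost surely once we further condition on $B,C,\ldots$: restricting to the smaller conditioning event $\{A,B,C,\ldots,\lambda\}$ can only shrink the null set $\{a\neq a^{*}\}$, never manufacture probability where there was none. Everything else is bookkeeping with the shorthand event notation.
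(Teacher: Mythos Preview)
Your proof is correct and follows essentially the same route as the paper: both arguments observe that $p(A,B,C,\ldots,\lambda)>0$ forces $p(A,\lambda)>0$, invoke Strong Determinism to get the point mass $p(a\mid A,\lambda)=\chi_{\{a=a^{*}\}}$, and then use the elementary fact that conditioning further on $B,C,\ldots$ cannot resurrect the null event $\{a\neq a^{*}\}$. The only cosmetic difference is that the paper writes this last step via the chain rule $p(a,A,B,C,\ldots,\lambda\mid A,\lambda)=p(a\mid A,B,C,\ldots,\lambda)\,p(A,B,C,\ldots,\lambda\mid A,\lambda)$, whereas you phrase it as monotonicity of $p$ on subsets---these are the same observation.
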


\begin{proof}
Suppose $p(A,B,C,\ldots,\lambda)>0$. \ Then $p(A,\lambda)>0$. \ So, by Strong
Determinism, there is an $a^{\ast}$ such that $p(a|A,\lambda)=\chi
_{\{a=a^{\ast}\}}$. \ But $p(a,A,B,C,\ldots,\lambda|A,\lambda
)=p(a|A,B,C,\ldots,\lambda)\times p(A,B,C,\ldots,\lambda|A,\lambda)$, where
$p(A,B,C,\ldots,\lambda|A,\lambda)>0$. \ From $p(a^{\ast}|A,\lambda)=1$,
$p(a^{\ast},A,B,C,\ldots,\lambda|A,\lambda)$ $=p(A,B,C,\ldots,\lambda|A,\lambda
)$. \ From $p(a|A,\lambda)=0$ when $a\neq a^{\ast}$, we get $p(a,A,B,C,\ldots
,\lambda|A,\lambda)=0$. \ Thus, $p(a|A,B,C,\ldots,\lambda)=\chi_{\{a=a^{\ast
}\}}$, establishing (2.3).
\end{proof}%

\medskip
Combinations of some of these properties give the well-known properties of
Locality and Non-Contextuality. \ First is Locality, formulated by Bell
\cite[1964]{bell64}. \ In words, a hidden-variable model satisfies Locality if
the probability of getting some tuple of outcomes factorizes under the measurements.

\begin{definition}
A hidden-variable model $(\Omega,p)$ satisfies \textbf{Locality} if for all
$a,b,c$, \ldots, $A,B,C$, ,\ldots, $\lambda$, whenever $p(A,B,C,\ldots
,\lambda)>0$,%
\begin{equation}
p(a,b,c,\ldots|A,B,C,\ldots,\lambda)=p(a|A,\lambda)\times p(b|B,\lambda)\times
p(c|C,\lambda)\times\cdots. \tag{2.4}%
\end{equation}

\end{definition}

\begin{proposition}
[{Jarrett \cite[1984, p.582]{jarrett84}}]A hidden-variable model satisfies
Locality if and only if it satisfies Outcome Independence and Parameter Independence.
\end{proposition}

\begin{proof}
Assume $p(A,B,C,\ldots,\lambda)>0$, and substitute (2.3) and its counterparts
into (2.2). \ This yields (2.4).

Conversely, assume again $p(A,B,C,\ldots,\lambda)>0$, and sum both sides of
(2.4) over $b,c$, \ldots. \ This yields (2.3). \ Moreover, substituting (2.3)
and its counterparts into (2.4) yields (2.2).
\end{proof}%

\medskip
Non-Contextuality, due to Kochen-Specker \cite[1967]{kochen-specker67}, is a
property of an empirical model. \ It says that the probability of obtaining a
particular outcome of a measurement does not depend on the other measurements performed.

\begin{definition}
An empirical model $(\Psi,q)$ satisfies \textbf{Non-Contextuality} if for all
$a,A,B,B^{\prime},C,C^{\prime}$, \ldots, whenever $q(A,B,C,\ldots)>0$ and
$q(A,B^{\prime},C^{\prime},\ldots)>0$,%
\[
q(a|A,B,C,\ldots)=q(a|A,B^{\prime},C^{\prime},\ldots).
\]
Also, the corresponding conditions must hold for $b,A,A^{\prime}%
,B,C,C^{\prime},\ldots$, etc.
\end{definition}

\begin{proposition}
If a hidden-variable model $(\Omega,p)$ satisfies $\lambda$-Independence and Parameter
Independence, then any equivalent empirical model $(\Psi,q)$ satisfies Non-Contextuality.
\end{proposition}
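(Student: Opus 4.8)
The plan is to unfold the definition of Non-Contextuality for the empirical model $(\Psi,q)$ and reduce it to the hidden-variable model $(\Omega,p)$ using equivalence, then exploit the two hypotheses on $p$ to collapse the dependence on the other measurements. Concretely, fix $a,A,B,C,\ldots$ and $A,B',C',\ldots$ with $q(A,B,C,\ldots)>0$ and $q(A,B',C',\ldots)>0$. By equivalence these are equivalent to $p(A,B,C,\ldots)>0$ and $p(A,B',C',\ldots)>0$, and the marginal outcome probability $q(a|A,B,C,\ldots)$ equals $p(a|A,B,C,\ldots)$ (summing the equivalence condition of Definition 2.2 over $b,c,\ldots$). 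So it suffices to prove the Non-Contextuality equation for $p$, namely $p(a|A,B,C,\ldots)=p(a|A,B',C',\ldots)$.

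First I would expand $p(a|A,B,C,\ldots)$ by conditioning on $\lambda$, writing
\[
p(a|A,B,C,\ldots)=\sum_{\lambda}p(a|A,B,C,\ldots,\lambda)\,p(\lambda|A,B,C,\ldots),
\]
the sum taken over $\lambda$ with $p(A,B,C,\ldots,\lambda)>0$, exactly as in the averaging formula displayed before Definition 2.4. Now Parameter Independence (2.3) replaces each factor $p(a|A,B,C,\ldots,\lambda)$ by $p(a|A,\lambda)$, which depends only on $A$ and $\lambda$ and not on $B,C,\ldots$. Next, $\lambda$-Independence lets me replace $p(\lambda|A,B,C,\ldots)$ by a value that does not depend on the measurements at all: since both measurement profiles have positive probability, $p(\lambda|A,B,C,\ldots)=p(\lambda|A,B',C',\ldots)$ for every $\lambda$. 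Thus both the summand $p(a|A,\lambda)$ and the weight $p(\lambda|A,B,C,\ldots)$ are unchanged when $(B,C,\ldots)$ is swapped for $(B',C',\ldots)$, and the same expansion for $p(a|A,B',C',\ldots)$ yields an identical sum. This gives $p(a|A,B,C,\ldots)=p(a|A,B',C',\ldots)$, and hence the corresponding equality for $q$. The argument for the other outcome variables ($b$, etc.) is symmetric.

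The one point requiring care — and the step I expect to be the main obstacle — is the bookkeeping of the conditioning sets so that every conditional probability invoked is actually well-defined. I must check that the index set $\{\lambda:p(A,B,C,\ldots,\lambda)>0\}$ in the two expansions can be reconciled: $\lambda$-Independence guarantees $p(\lambda|A,B,C,\ldots)=p(\lambda|A',B',C',\ldots)$ whenever \emph{both} full profiles are positive, but I should confirm that $p(A,B,C,\ldots,\lambda)>0$ iff $p(A,B',C',\ldots,\lambda)>0$, so the two sums range over the same $\lambda$'s and no term is silently dropped. This follows because $p(\lambda|A,B,C,\ldots)>0$ iff $p(A,B,C,\ldots,\lambda)>0$ (given $p(A,B,C,\ldots)>0$), and $\lambda$-Independence equates these conditionals across profiles. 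Once the supports are matched, the substitution is term-by-term and the remaining manipulation is routine.
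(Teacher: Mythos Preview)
Your proof is correct and follows essentially the same route as the paper: expand $p(a|A,B,C,\ldots)$ by conditioning on $\lambda$, use Parameter Independence to strip $B,C,\ldots$ from the first factor and $\lambda$-Independence to strip them from the second, and check (again via $\lambda$-Independence) that the two summation ranges over $\lambda$ coincide. The only cosmetic difference is that the paper writes $p(\lambda|A,B,C,\ldots)=p(\lambda)$ and reindexes both sums over $\{\lambda:p(\lambda)>0\}$, whereas you pass directly from one profile to the other; the substance is identical.
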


\begin{proof}
We can assume $p(A,B,C,\ldots)>0$ and $p(A,B^{\prime},C^{\prime},\ldots)>0$.
\ Then%
\begin{multline*}
p(a|A,B,C,\ldots)=\sum\limits_{\{\lambda:p(A,B,C,\ldots,\lambda)>0\}}%
p(a|A,B,C,\ldots,\lambda)p(\lambda|A,B,C,\ldots)=\\
\sum\limits_{\{\lambda:p(A,B,C,\ldots,\lambda)>0\}}p(a|A,B,C,\ldots
,\lambda)p(\lambda)=\\
\sum\limits_{\{\lambda:p(A,B,C,\ldots,\lambda)>0\}}p(a|A,\lambda)p(\lambda),
\end{multline*}
where the second line uses $\lambda$-Independence and the third line uses Parameter
Independence. \ Using $p(A,B,C,\ldots)>0$ and $\lambda$-Independence again, we have
$p(A,B,C,\ldots,\lambda)>0$ if and only if $p(\lambda)>0$. \ So%
\[
p(a|A,B,C,\ldots)=\sum\limits_{\{\lambda:p(\lambda)>0\}}p(a|A,\lambda
)p(\lambda).
\]

A similar argument establishes%
\[
p(a|A,B^{\prime},C^{\prime},\ldots)=\sum\limits_{\{\lambda:p(\lambda
)>0\}}p(a|A,\lambda)p(\lambda),
\]
so that $p(a|A,B,C,\ldots)=p(a|A,B^{\prime},C^{\prime},\ldots)$, as required.
\end{proof}

\section{Two Existence Theorems}

We next prove two existence theorems for hidden-variable models which say what
type of properties can always be found:

\begin{enumerate}
\item[E1] \textit{Given any empirical model, there is an equivalent
hidden-variable model which satisfies Strong Determinism.}

\item[E2] \textit{Given any empirical model, there is an equivalent
hidden-variable model which satisfies Weak Determinism and $\lambda$-Independence.}
\end{enumerate}

That is, each of these sets of conditions on a hidden-variable theory can
always be satisfied. \ They cannot be impeded by any no-go theorems. \ Figure
3.1 repeats part of Figure 1.1, putting a check in a region where there is
always an equivalent hidden-variable model with the properties that hold in
that region. \ The checks are followed by E1 and/or E2 which say which
existence theorem pertains to that region.

(The region for Single-Valuedness alone also has a check. \ The existence of
an equivalent hidden-variable model satisfying Single-Valuedness alone is
immediate--it is essentially just the given empirical model. \ See Remark 3.1
for a statement.)%
\[%
{\parbox[b]{5.5936in}{\begin{center}
\includegraphics[
trim=0.000000in 0.000000in -0.381536in 0.000000in,
natheight=5.010700in,
natwidth=8.083400in,
height=3.3217in,
width=5.5936in
]%
{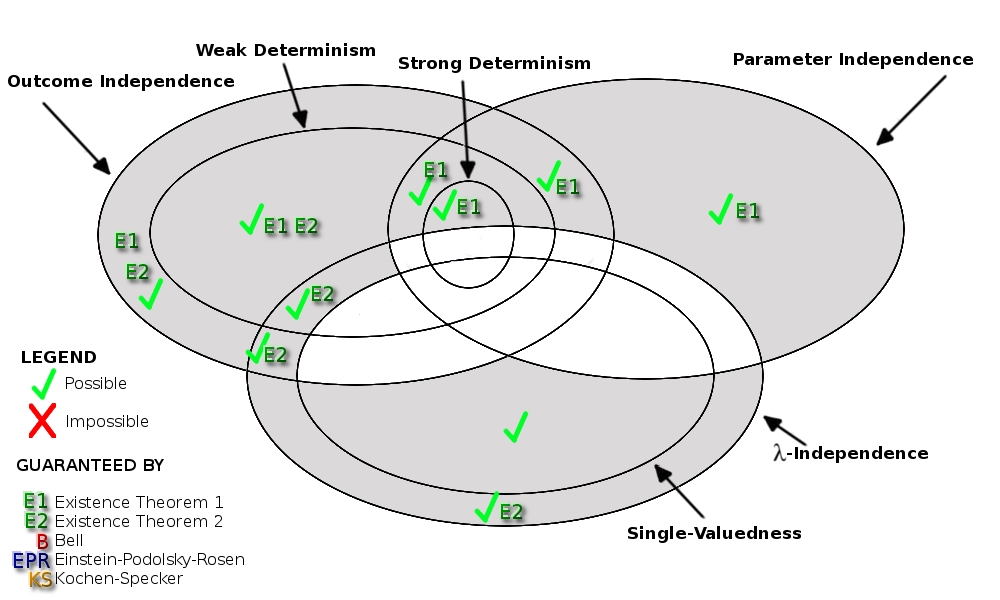}%
\\
Figure 3.1
\end{center}}}%
\]

Here are the two existence theorems. \ Similar methods to those in
the first proof can be found in Fine \cite[1982, p.292]{fine82a}. \
The idea of the second theorem is in
Teufel-Berndl-D\"{u}rr-Goldstein-Zangh\`{\i} \cite[1997,
p.1219]{teufel-berndl-durr-goldstein-zanghi97} (see also Werner and
Wolf \cite[2001, p.7]{werner-wolf01}). \ But we have not found exact
statements of the two theorems in the literature.

\begin{theorem}
Given an empirical model $(\Psi,q)$, there is an equivalent hidden-variable
model $(\Omega,p)$ which satisfies Strong Determinism.
\end{theorem}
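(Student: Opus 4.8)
The plan is to let the hidden variable simply record, in advance, the outcome that each party's measurement will produce. Concretely, I would take $\Lambda=\{a,a^{\prime},\ldots\}\times\{b,b^{\prime},\ldots\}\times\cdots$, the set of all outcome-tuples, so that a value of $\lambda$ is itself a tuple $(a^{\ast},b^{\ast},c^{\ast},\ldots)$. A point of $\Omega=\Psi\times\Lambda$ then has the form $(a,b,c,\ldots,A,B,C,\ldots,\lambda)$, and I will place positive mass only on those points at which the recorded outcomes agree with the actual ones.

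To define $p$, I would set, for each measurement-tuple $(A,B,C,\ldots)$ and each outcome-tuple $(a,b,c,\ldots)$,
\[
p(a,b,c,\ldots,A,B,C,\ldots,\lambda)=
\begin{cases}
q(a,b,c,\ldots,A,B,C,\ldots) & \text{if }\lambda=(a,b,c,\ldots),\\
0 & \text{otherwise.}
\end{cases}
\]
Summing over $\lambda$ and over all the $\Psi$-coordinates recovers $\sum q=1$, so $p$ is a probability measure on $\Omega$; this step is pure bookkeeping. For empirical equivalence (Definition 2.2), marginalizing $p$ over $\lambda$ and over the outcomes leaves $p(A,B,C,\ldots)=q(A,B,C,\ldots)$, which gives the non-nullness clause at once. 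For the conditional clause, the only $\lambda$ contributing to the event $\{a,b,c,\ldots,A,B,C,\ldots\}$ is $\lambda=(a,b,c,\ldots)$, so $p(a,b,c,\ldots,A,B,C,\ldots)=q(a,b,c,\ldots,A,B,C,\ldots)$, and dividing through yields $p(a,b,c,\ldots\,|\,A,B,C,\ldots)=q(a,b,c,\ldots\,|\,A,B,C,\ldots)$ whenever the conditioning event is non-null.

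The final and most delicate step is Strong Determinism, and here the point is to check the measurement-by-measurement form rather than merely the weaker joint form. Fix $A$ and a value $\lambda=(a^{\ast},b^{\ast},c^{\ast},\ldots)$; if $p(A,\lambda)=0$ the condition is vacuous, so assume $p(A,\lambda)>0$. By construction $p$ is supported on points whose Ann-outcome coincides with the Ann-coordinate $a^{\ast}$ of $\lambda$, so $p(a^{\ast}\,|\,A,\lambda)=1$, and likewise $p(b^{\ast}\,|\,B,\lambda)=1$, and so on. What I would stress is that $\lambda$ fixes each party's outcome on its own, so the determined outcome for $A$ does not consult the other measurements $B,C,\ldots$; this is precisely what distinguishes Strong Determinism (Definition 2.5) from Weak Determinism (Definition 2.6).

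I do not expect a genuine obstacle here: the construction works exactly because E1 asks only for Strong Determinism and imposes no requirement of $\lambda$-Independence. Indeed, the conditional law of $\lambda$ in this model does depend on the chosen measurements, so $\lambda$-Independence generally fails—and that is harmless for the present theorem. It is precisely the additional demand of $\lambda$-Independence that rules out so trivial a choice of hidden variable and forces the more elaborate, response-function style argument needed for the second existence theorem (E2).
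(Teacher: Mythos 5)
Your construction is correct and is essentially the paper's: the hidden variable records the outcome tuple in advance, $p$ is concentrated on points where the recorded and actual outcomes agree (with weights inherited from $q$), equivalence holds because marginalizing out $\lambda$ returns $q$, and Strong Determinism holds because $\lambda$ pins down each party's outcome separately. The only difference is that the paper's $\Lambda$ also carries a copy of the measurement settings (``all the information possible''), which your leaner choice shows is unnecessary for this theorem; your check of the measurement-by-measurement form $p(a^{\ast}\,|\,A,\lambda)=1$ is the same computation the paper does.
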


The proof is basically a mathematical trick. \ We simply take the hidden
variable to be all the information possible. \ This means, in particular, that
the hidden variable would have to `know' the probabilities for different
measurements and outcomes. \ With this huge hidden variable, we can build up
the probability measure $p$ from the given measure $q$. \ This construction is
physically unsatisfying, of course--but not ruled out by the general concept
of a hidden variable. \ It is also rather obvious. \ Bell \cite[1971]{bell71}
wrote: \textquotedblleft If no restrictions whatever are imposed on the hidden
variables, or on the dispersion-free states, it is trivially clear that such
schemes can be found to account for any experimental results
whatever\textquotedblright\ (reprinted in \cite[p.33]{bell87}). \ Still, we
give a proof--which, in particular, makes clear that even Strong Determinism
is achieved. \ (This won't be possible in the next existence result.)%
\medskip

\begin{proof}
We give the proof for the case that $\Psi$ is a 4-way product, but the
extension to a general (finite) product will be clear. \ Set
\[
\Lambda=\{a,a^{\prime},\ldots\}\times\{b,b^{\prime},\ldots\}\times
\{A,A^{\prime},\ldots\}\times\{B,B^{\prime},\ldots\},
\]
and define $p$ in stages, as follows. \ (Figure 3.2 shows the construction.)
\ For any pair $A,B$, set%
\begin{equation}
p(A,B)=q(A,B). \tag{3.1}%
\end{equation}

For any pair $A,B$, and $\lambda=(\tilde{a},\tilde{b},\tilde{A},\tilde{B})$,
set%
\begin{equation}
p(\lambda|A,B)=\left\{
\begin{array}
[c]{ll}%
q(\tilde{a},\tilde{b}|A,B) & \text{if }\tilde{A}=A\text{ and }\tilde
{B}=B\text{,}\\
0 & \text{otherwise.}%
\end{array}
\right.  \tag{3.2}%
\end{equation}

For pairs $a,b$ and $A,B$, and $\lambda=(\tilde{a},\tilde{b},\tilde{A}%
,\tilde{B})$, set%
\begin{equation}
p(a,b|A,B,\lambda)=\left\{
\begin{array}
[c]{ll}%
1 & \text{if }\tilde{a}=a\text{, }\tilde{b}=b\text{, }\tilde{A}=A\text{,
}\tilde{B}=B\text{,}\\
0 & \text{otherwise.}%
\end{array}
\right.  \tag{3.3}%
\end{equation}

This defines a measure $p$ on $\Omega$ using%
\[
p(a,b,A,B,\lambda)=p(a,b|A,B,\lambda)\times p(\lambda|A,B)\times p(A,B).
\]
(Note that $p(\cdot,\cdot|A,B,\lambda)$ is not a measure if $A\neq\tilde{A}$
or $B\neq\tilde{B}$. \ But then $p(\lambda|A,B)=0$, so there is no difficulty.)

From (3.1), $p(A,B)>0$ if and only if $q(A,B)>0$. \ If both are positive, then
from Figure 3.2,
\[
p(a,b|A,B)=1\times q(a,b|A,B),
\]
so that equivalence is satisfied.

\begin{center}
\includegraphics[
natheight=7.499600in,
natwidth=9.999800in,
height=3.4042in,
width=4.8000in
]
{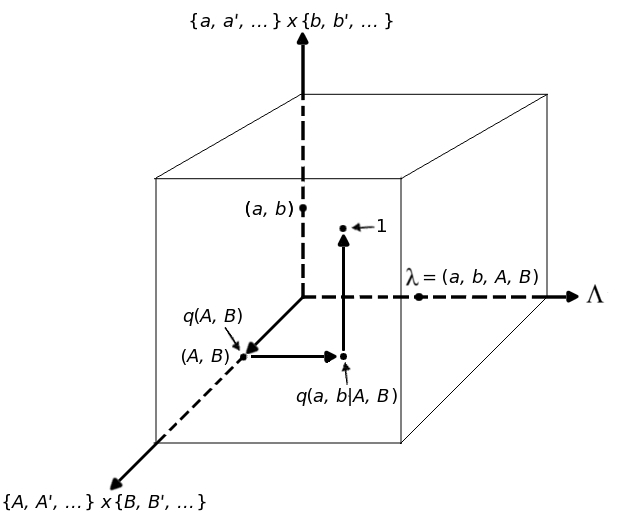}
\\
Figure 3.2
\end{center}

It remains to verify that $(\Omega,p)$ satisfies Strong Determinism. \ So,
suppose $p(A,\lambda)>0$. \ Writing $\lambda=(\tilde{a},\tilde{b},\tilde
{A},\tilde{B})$, we therefore assume $A=\tilde{A}$. \ Using (3.1)-(3.3),%
\begin{align*}
p(a,A,\lambda)  &  =p(a,\tilde{A},\lambda)=\sum\limits_{b^{\prime},B^{\prime}%
}p(a,b^{\prime},\tilde{A},B^{\prime},\lambda)=p(a,\tilde{b},\tilde{A}%
,\tilde{B},\lambda)=q(a,\tilde{b},\tilde{A},\tilde{B})\times\chi
_{\{a=\tilde{a}\}},\\
p(A,\lambda)  &  =p(\tilde{A},\lambda)=\sum\limits_{a^{\prime},b^{\prime
},B^{\prime}}p(a^{\prime},b^{\prime},\tilde{A},B^{\prime},\lambda)=p(\tilde
{a},\tilde{b},\tilde{A},\tilde{B},\lambda)=q(\tilde{a},\tilde{b},\tilde
{A},\tilde{B}),
\end{align*}
so that%
\[
p(a|A,\lambda)=\chi_{\{a=\tilde{a}\}},
\]
which is Strong Determinism.
\end{proof}

\begin{corollary}
Given a hidden-variable model $(\Omega,p)$, there is an equivalent
hidden-variable model $(\Omega,\overline{p})$ which satisfies Strong Determinism.
\end{corollary}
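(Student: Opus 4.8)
The plan is to reduce the corollary to Theorem 3.1 by first extracting an empirical model from the given hidden-variable model. Given $(\Omega,p)$ with $\Omega=\Psi\times\Lambda$, I would define a measure $q$ on $\Psi$ to be the marginal of $p$, i.e., for each point of $\Psi$ set $q(a,b,c,\ldots,A,B,C,\ldots)=\sum_{\lambda}p(a,b,c,\ldots,A,B,C,\ldots,\lambda)$. This is a probability measure on $\Psi$, so $(\Psi,q)$ is an empirical model in the sense of Definition 2.1.

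The first step is to check that $(\Psi,q)$ and $(\Omega,p)$ are equivalent in the sense of Definition 2.2. Since $q$ is the $\Psi$-marginal of $p$, the event $A,B,C,\ldots$ carries the same mass under $q$ and under $p$, so $q(A,B,C,\ldots)>0$ if and only if $p(A,B,C,\ldots)>0$; and the same identity of joint masses on $\Psi$-events gives $q(a,b,c,\ldots|A,B,C,\ldots)=p(a,b,c,\ldots|A,B,C,\ldots)$ whenever these are defined. This step is routine: it just records that summing out $\lambda$ does not change the probability of any event that does not mention $\lambda$.

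Next I would apply Theorem 3.1 to the empirical model $(\Psi,q)$. This produces a hidden-variable model $(\Omega,\overline{p})$ (on the augmented space built in that proof) which is equivalent to $(\Psi,q)$ and satisfies Strong Determinism. It remains only to transfer equivalence: since the relevant notions of equivalence (Definitions 2.2 and 2.3) are conditions purely on the conditional distributions of $\Psi$-outcomes given $\Psi$-measurements, and since these conditionals agree between $(\Omega,\overline{p})$ and $(\Psi,q)$ and between $(\Psi,q)$ and $(\Omega,p)$, they agree between $(\Omega,\overline{p})$ and $(\Omega,p)$. Hence $(\Omega,\overline{p})$ is equivalent to $(\Omega,p)$ in the sense of Definition 2.3 and satisfies Strong Determinism, as required.

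I expect no serious obstacle: the entire content is the observation that a hidden-variable model and its $\Psi$-marginal are equivalent, after which Theorem 3.1 does all the work. The only point demanding a little care is that equivalence is defined more than once (Definitions 2.2 and 2.3) for different pairs of model types, so I would spell out that each version reduces to the same condition on the $\Psi$-conditionals, making the chain of equivalences genuinely transitive. One should also note that the hidden-variable space underlying $\overline{p}$ is the large space furnished by Theorem 3.1 rather than the original $\Lambda$; since Definition 2.3 never refers to the hidden-variable coordinate, this reuse of the symbol $\Omega$ is harmless.
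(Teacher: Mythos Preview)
Your proposal is correct and follows essentially the same route as the paper: extract an empirical model from $(\Omega,p)$ by marginalizing out $\lambda$, apply Theorem 3.1, and use transitivity of equivalence. Your version is slightly more explicit than the paper's (which only specifies the conditionals $q(a,b|A,B)$ and remarks that completing $q$ poses no difficulty), and your observation about the reuse of the symbol $\Omega$ is a valid clarification of a notational abuse present in the original statement.
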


\begin{proof}
Start with $(\Omega,p)$, and (partially) define an equivalent empirical model
$(\Psi,q)$ by%
\[
q(a,b|A,B)=\sum\limits_{\{\lambda:p(A,B,\lambda)>0\}}p(a,b|A,B,\lambda
)p(\lambda|A,B),
\]
for $p(A,B)>0$. \ (There is no difficulty in completing the definition of $q$.)

By Theorem 3.1, there is a hidden-variable model $(\Omega,\overline{p})$ which
is equivalent to $(\Psi,q)$ and which satisfies Strong Determinism. \ But
$(\Omega,\overline{p})$ is also equivalent to $(\Omega,p)$.
\end{proof}%

\medskip
We state the next existence result for the case of rational probabilities, to
enable us to prove it with a finite set $\Lambda$. \ After the proof, we
sketch the extension to all probabilities.

\begin{theorem}
Given an empirical model $(\Psi,q)$ with rational probabilities, there is an
equivalent hidden-variable model $(\Omega,p)$ which satisfies Weak Determinism
and $\lambda$-Independence.
\end{theorem}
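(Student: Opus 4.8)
The plan is to build $\Lambda$ as a finite ``randomization reservoir'' carrying a single uniform distribution, and to let the measurement setting select a deterministic response keyed to the value of $\lambda$. The crucial point is that $\lambda$-Independence forces the marginal law of $\lambda$ to be identical under every measurement choice, while Weak Determinism demands that, once the setting and $\lambda$ are fixed, the outcome tuple be pinned down. I would reconcile these by holding the distribution of $\lambda$ fixed and uniform, and instead letting the deterministic map from $\lambda$ to outcomes depend on the full measurement tuple.

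Concretely, write $\mathbf{M}=(A,B,C,\ldots)$ for a measurement setting and $\mathbf{o}=(a,b,c,\ldots)$ for an outcome tuple. First I would exploit rationality: since every $q(\mathbf{o}\mid\mathbf{M})$ is rational, there is a common denominator $N$ with $N\,q(\mathbf{o}\mid\mathbf{M})$ a nonnegative integer for all $\mathbf{o}$ and all $\mathbf{M}$ with $q(\mathbf{M})>0$. Set $\Lambda=\{1,\ldots,N\}$, which is finite as required. For each such $\mathbf{M}$, partition $\Lambda$ into blocks $I^{\mathbf{M}}_{\mathbf{o}}$ with $|I^{\mathbf{M}}_{\mathbf{o}}|=N\,q(\mathbf{o}\mid\mathbf{M})$; these sizes are nonnegative integers summing to $N$, so a valid partition exists. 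Let $f(\mathbf{M},\lambda)$ denote the unique $\mathbf{o}$ with $\lambda\in I^{\mathbf{M}}_{\mathbf{o}}$.

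Then I would define $p$ on $\Omega$ by making $\lambda$ uniform and independent of the setting and routing all mass through the response function:
\[
p(\mathbf{o},\mathbf{M},\lambda)=\left\{
\begin{array}{ll}
q(\mathbf{M})\cdot\tfrac{1}{N} & \text{if }\mathbf{o}=f(\mathbf{M},\lambda),\\
0 & \text{otherwise.}
\end{array}\right.
\]
The three verifications are then routine. $\lambda$-Independence holds because $p(\lambda\mid\mathbf{M})=1/N$ for every admissible $\mathbf{M}$. Weak Determinism holds because $p(\mathbf{M},\lambda)>0$ forces $p(f(\mathbf{M},\lambda)\mid\mathbf{M},\lambda)=1$. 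Equivalence holds because $p(\mathbf{M})=q(\mathbf{M})$ matches the positivity condition, and averaging over the uniform $\lambda$ gives $p(\mathbf{o}\mid\mathbf{M})=|I^{\mathbf{M}}_{\mathbf{o}}|/N=q(\mathbf{o}\mid\mathbf{M})$.

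I expect the only real obstacle to be conceptual rather than computational: the tension between fixing the law of $\lambda$ across settings (for $\lambda$-Independence) and pinning down outcomes (for Weak Determinism). The resolution, a single uniform $\lambda$ together with a setting-dependent deterministic response, is precisely what blocks any strengthening to Strong Determinism, since the response to the measurement $A$ is read off a partition built from the \emph{joint} conditional distribution and so may depend on $B,C,\ldots$ as well. The one place finiteness genuinely bites is the existence of the common denominator $N$; to drop rationality I would replace $\Lambda$ by $[0,1]$ under Lebesgue measure and partition it into subintervals of lengths $q(\mathbf{o}\mid\mathbf{M})$, which is the sketched infinite extension.
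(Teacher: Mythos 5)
Your proposal is correct and is essentially the paper's own argument: a uniform $\lambda$ on a finite set of size equal to a common denominator $N$, taken independent of the measurement settings (giving $\lambda$-Independence), with each admissible setting partitioning $\Lambda$ into blocks of sizes $N\,q(\mathbf{o}\mid\mathbf{M})$ on which the outcome is deterministic (giving Weak Determinism and equivalence). The only cosmetic difference is that the paper takes $N$ to be the product of all the denominators $s_{ijkl}$ and counts points explicitly, whereas you work with an arbitrary common denominator and an explicit response function $f(\mathbf{M},\lambda)$; your remark on why this blocks Strong Determinism matches the paper's discussion as well.
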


The idea of the proof is to have the hidden variable live on the (discretized)
unit interval, and then to split up the interval according to relevant
probabilities.%
\medskip

\begin{proof}
We give a proof for the case that $\Psi$ is a 4-way product, but the
argument clearly extends. Let $p$ be arbitrary (but with full support) on $\{A,A^{\prime},\ldots
\}\times\{B,B^{\prime},\ldots\}$. \ Fix a pair of settings $(A_{i},B_{j})$
with $q(A_{i},B_{j})>0$. \ For a pair of outcomes $(a_{k},b_{l})$, write the
conditional probability as%
\[
q(a_{k},b_{l}|A_{i},B_{j})=\frac{r_{ijkl}}{s_{ijkl}}%
\]
for some integers $r_{ijkl}\geq0$ and $s_{ijkl}>0$.

Let $\Lambda$ have $N$ points, where%
\[
N=\prod\limits_{i,j,k,l}s_{ijkl}.
\]
We let $p$ be uniform on $N$, and then form the product on $\{A,A^{\prime
},\ldots\}\times\{B,B^{\prime},\ldots\}\times\Lambda$. \ Thus, $\lambda$-Independence
is satisfied.

Still fixing $(A_{i},B_{j})$, we `assign' to $(a_{k},b_{l})$ the number of
points%
\[
r_{ijkl}\times\prod\limits_{\substack{k^{\prime}\neq k \\l^{\prime}\neq l
}}s_{ijk^{\prime}l^{\prime}}\times\prod\limits_{\substack{i^{\prime}\neq i
\\j^{\prime}\neq j}}\prod\limits_{k,l}s_{i'j'kl}.
\]

Formally, we mean that for each of these points,%
\[
p(a_{k},b_{l}|A_{i},B_{j},\lambda)=1.
\]

Thus, Weak Determinism is satisfied. \ Note that, again for fixed
$(A_{i},B_{j})$, the probability of choosing one of these points is%
\[
\frac{r_{ijkl}\times\prod_{\substack{k^{\prime}\neq k \\l^{\prime}\neq l
}}s_{ijk^{\prime}l^{\prime}}\times\prod_{\substack{i^{\prime}\neq i
\\j^{\prime}\neq j}}\prod_{k,l}s_{i'j'kl}}{\prod_{i,j,k,l}s_{ijkl}}%
=\frac{r_{ijkl}}{s_{ijkl}},
\]
and so%
\[
p(a_{k},b_{l}|A_{i},B_{j})=\frac{r_{ijkl}}{s_{ijkl}}=q(a_{k},b_{l}|A_{i}%
,B_{j}),
\]
establishing equivalence.
\end{proof}

\begin{corollary}
Given a hidden-variable model $(\Omega,p)$ with rational probabilities, there
is an equivalent hidden-variable model $(\Omega,\overline{p})$ which satisfies
Weak Determinism and $\lambda$-Independence.
\end{corollary}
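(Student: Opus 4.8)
The plan is to mirror the proof of Corollary 3.1 exactly, substituting Theorem 3.2 for Theorem 3.1 and checking that the rationality hypothesis propagates. First I would start with the given model $(\Omega,p)$ and distill from it an empirical model $(\Psi,q)$ that records only its outcome predictions. Concretely, for each pair $A,B$ with $p(A,B)>0$ I would set
\[
q(a,b|A,B)=\sum_{\{\lambda:p(A,B,\lambda)>0\}}p(a,b|A,B,\lambda)\,p(\lambda|A,B),
\]
and choose the marginal $q(A,B)$ to agree in support with $p(A,B)$ (for instance taking $q(A,B)=p(A,B)$); the remaining values of $q$ can be completed arbitrarily, exactly as in the proof of Corollary 3.1.

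The key observation is that this $(\Psi,q)$ has rational probabilities. Since $p$ has rational probabilities by hypothesis, each conditional $p(a,b|A,B,\lambda)$ and each $p(\lambda|A,B)$ is a ratio of rationals, so the displayed sum of products is again rational, and the marginals $q(A,B)$ were chosen rational as well. Hence $(\Psi,q)$ satisfies the hypothesis of Theorem 3.2, and I may apply that theorem to obtain a hidden-variable model $(\Omega,\overline{p})$ which is equivalent to $(\Psi,q)$ and which satisfies Weak Determinism and $\lambda$-Independence.

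The final step is the transitivity of equivalence. By construction $(\Psi,q)$ and $(\Omega,p)$ have matching measurement supports and equal conditional outcome probabilities, and the same holds for $(\Psi,q)$ and $(\Omega,\overline{p})$ by the previous step; chaining these two identities gives that $(\Omega,\overline{p})$ and $(\Omega,p)$ are equivalent in the sense of Definition 2.3. I expect no serious obstacle here, since the argument is essentially bookkeeping. The one point requiring care, and the only place where the proof differs from Corollary 3.1, is verifying that the passage from $p$ to the induced $q$ preserves rationality, so that Theorem 3.2 is in fact applicable; everything else is the routine transitivity step already used in Corollary 3.1.
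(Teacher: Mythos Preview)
Your proposal is correct and follows exactly the paper's approach: the paper's proof is the single line ``As for Corollary 3.1, using Theorem 3.2 in place of Theorem 3.1,'' and you have simply unpacked that line, including the one additional check (that the induced empirical model inherits rational probabilities so Theorem 3.2 applies) which the paper leaves implicit.
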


\begin{proof}
As for Corollary 3.1, using Theorem 3.2 in place of Theorem 3.1.
\end{proof}%

\medskip
We could handle irrational probabilities in Theorem 3.2, if we allowed an
infinite $\Lambda$. \ Set $\Lambda=[0,1]$ and again take $p$ to be uniform on
$\Lambda$. \ Fix again a pair of settings $(A,B)$ with $q(A,B)>0$. \ Write the
support of $q(\cdot,\cdot|A,B)$ as $\{(a_{1},b_{1}),\ldots,(a_{n},b_{n}%
)\}$.\ \ Partition $\Lambda$ as in Figure 3.3. \ On element $\Lambda^{k}$ of
the partition, we set $p(a_{k},b_{k}|A,B,\lambda)=1$. \ Conceptually, $\lambda$-Independence and
Weak Determinism follow as before. \ (But to make this formal,
we would need to extend these definitions to infinite $\Lambda$.)%

\begin{center}
\includegraphics[
natheight=7.499600in,
natwidth=9.999800in,
height=1.079in,
width=4.9159in
]
{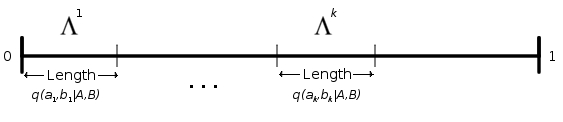}
\\
Figure 3.3
\end{center}

Finally in this section, we record the obvious fact:

\begin{remark}
Given an empirical model $(\Psi,q)$, there is an equivalent hidden-variable
model $(\Omega,p)$ which satisfies Single-Valuedness.
\end{remark}

\begin{proof}
Let $\Lambda=\{\lambda\}$, and, for $a,b,A,B$, let $p(a,b,A,B,\lambda
)=q(a,b,A,B)$.
\end{proof}

\section{EPR}

We have seen that certain types of hidden-variable model are always possible.
\ Next come the no-go theorems, expressed in our framework. \ These show that
no other type of hidden-variable model (among those covered by our six
conditions) is necessarily possible.

Here is the first no-go theorem, due to EPR \cite[1935]{einstein-podolsky-rosen35}, expressed in our framework. \ (Our formulation is
very similar to that in Norsen \cite[2004]{norsen04}.) \ Figure 4.1 adds
crosses to Figure 3.1, in accordance with the EPR result.%

Note that, as with our other statements, EPR as given here is a simple result
in probability theory. \ But the notation we use in the empirical model is
meant to reflect the underlying physical set-up which was of interest to EPR.
\ (More precisely, it reflects Bohm's \cite[1951]{bohm51} reformulation of
EPR.) \ In the physical set-up, there are two entangled particles that are
anti-correlated. \ If Ann measures positive spin, then Bob measures negative
spin, and vice versa. \ There is a 50-50 chance of each pair of outcomes.

\begin{theorem}
[{EPR \cite[1935]{einstein-podolsky-rosen35}}]There is an empirical model
$(\Psi,q)$ for which there is no equivalent hidden-variable model $(\Omega,p)$
which satisfies Single-Valuedness and Outcome Independence.
\end{theorem}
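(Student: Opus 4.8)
The plan is to exhibit the perfectly anti-correlated ``spin'' model motivated by the Bohm--EPR set-up and to show that Single-Valuedness and Outcome Independence together force the joint outcome distribution to factorize, which anti-correlation forbids. First I would construct the empirical model. Let each of Ann and Bob have a single available measurement, $A$ and $B$, with two outcomes apiece; write Ann's outcomes as $a,a'$ and Bob's as $b,b'$. Define $q$ by
\[
q(a,b'|A,B)=q(a',b|A,B)=\tfrac{1}{2},\qquad q(a,b|A,B)=q(a',b'|A,B)=0,
\]
so that the two outcomes are perfectly anti-correlated with a $50$-$50$ split. In particular the marginals are $q(a|A,B)=q(b|A,B)=\tfrac{1}{2}$.

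Next I would argue by contradiction. Suppose $(\Omega,p)$ is equivalent to $(\Psi,q)$ and satisfies both Single-Valuedness and Outcome Independence. By Single-Valuedness, $\Lambda=\{\lambda\}$ is a singleton, so conditioning on $\lambda$ is vacuous and $p(\,\cdot\mid A,B,\lambda)=p(\,\cdot\mid A,B)$. By Outcome Independence together with Lemma 2.2, $p(a,b\mid A,B,\lambda)=p(a\mid A,B,\lambda)\times p(b\mid A,B,\lambda)$; combining these two observations collapses the conditioning on $\lambda$ and gives the plain product law $p(a,b\mid A,B)=p(a\mid A,B)\times p(b\mid A,B)$. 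Equivalence (Definition 2.2) then yields $q(a,b\mid A,B)=p(a\mid A,B)\times p(b\mid A,B)$, so $q(\,\cdot,\cdot\mid A,B)$ must be a product measure.

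The final step closes the argument. Summing the product identity over $b$ shows $p(a\mid A,B)=q(a\mid A,B)=\tfrac{1}{2}$, and likewise $p(b\mid A,B)=\tfrac{1}{2}$, so the factorization predicts $q(a,b\mid A,B)=\tfrac{1}{4}$, whereas by construction $q(a,b\mid A,B)=0$—a contradiction. I do not expect a genuine technical obstacle; the only substantive point is recognizing that Single-Valuedness collapses the hidden variable so that Outcome Independence degenerates into ordinary (unconditional) independence of the two outcomes. The ``EPR content'' enters solely through the choice of a perfectly anti-correlated $q$, the simplest distribution that fails to be a product, and the remaining manipulations are routine bookkeeping with conditional probabilities. (Adding further measurement settings for Ann and Bob would not affect this particular obstruction, so I would keep the model minimal.)
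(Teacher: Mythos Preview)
Your proposal is correct and follows essentially the same approach as the paper: the same perfectly anti-correlated Bohm--EPR model is used, Single-Valuedness collapses the hidden variable, and Outcome Independence is shown to fail. The only cosmetic difference is that the paper exhibits the contradiction directly via the defining equation (2.1), computing $p(+_a\mid A,B,-_b,\lambda)=1\neq\tfrac{1}{2}=p(+_a\mid A,B,\lambda)$, whereas you pass through the equivalent factorization form (Lemma 2.2) to reach $0\neq\tfrac{1}{4}$; the content is identical.
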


\[%
{\parbox[b]{5.5936in}{\begin{center}
\includegraphics[
trim=0.000000in 0.000000in -0.381536in 0.000000in,
natheight=5.010700in,
natwidth=8.083400in,
height=3.3217in,
width=5.5936in
]%
{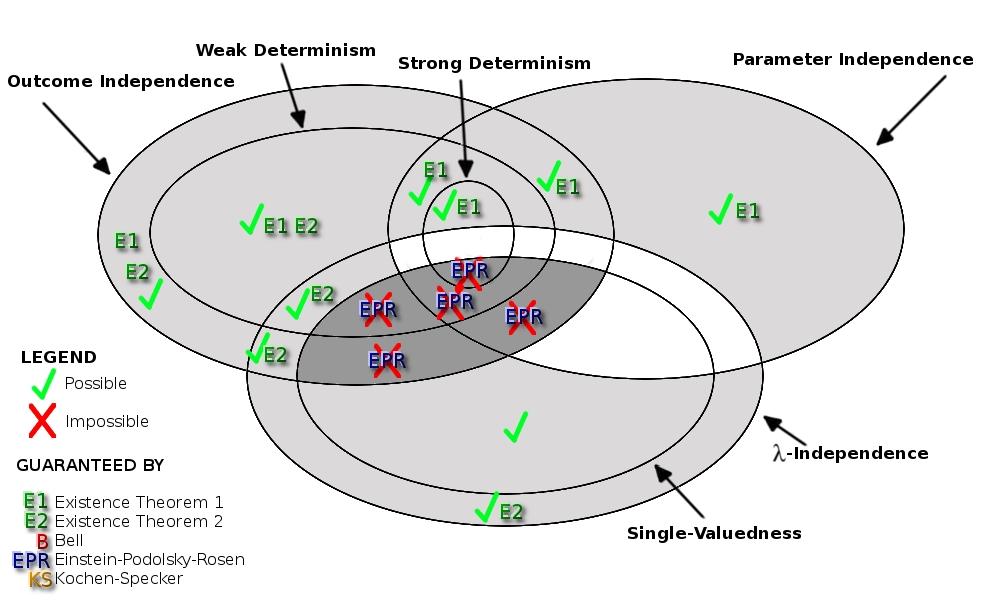}%
\\
Figure 4.1
\end{center}}}%
\]

\begin{proof}
We let%
\[
\Psi=\{+_{a},-_{a}\}\times\{+_{b},-_{b}\}\times\{A\}\times\{B\},
\]
and define $q$ as in Figure 4.2:%

\vspace{.5in}
\begin{center}
\begin{tabular}{l||c|c||}
 &$+_b$&$-_b$\\ \hline \hline&~& \\
$+_a$&0&$\frac{1}{2}$ \\  &~& \\ \hline &~& \\
$-_a$&$\frac{1}{2}$&$0$ \\ &~& \\\hline\hline
\end{tabular} $\qquad q(\cdot, \cdot |A,B)$
\end{center}
\begin{center} Figure 4.2 ~~~~~~~~~~~~\end{center}
\vspace{.2in}

Now suppose, contra hypothesis, there is an equivalent hidden-variable model
$(\Omega,p)$ satisfying Single-Valuedness and Outcome Independence. \ Let
$\Lambda=\{\lambda\}$. \ Then we must have%
\[
p(+_{a},-_{b}|A,B,\lambda)=p(-_{a},+_{b}|A,B,\lambda)=\frac{1}{2},
\]
from which%
\[
p(+_{a}|A,B,\lambda)=p(+_{a},+_{b}|A,B,\lambda)+p(+_{a},-_{b}|A,B,\lambda
)=0+\frac{1}{2},
\]
and%
\[
p(+_{a}|A,B,-_{b},\lambda)=\frac{p(+_{a},-_{b}|A,B,\lambda)}{p(-_{b}%
|A,B,\lambda)}=\frac{\frac{1}{2}}{\frac{1}{2}}=1,
\]
contradicting Outcome Independence.
\end{proof}%

\medskip
The conditions of EPR are tight. \ By Remark 3.1, we cannot drop Outcome
Independence. \ By Theorem 3.1 or 3.2, we cannot drop Single-Valuedness.
\ Here is a specific construction--for the EPR empirical model--of an
equivalent hidden-variable model satisfying Strong Determinism (so, certainly
Outcome Independence) and even $\lambda$-Independence. \ Let $\Lambda=\{\lambda
^{1},\lambda^{2}\}$, and set $p(\lambda^{1})=p(\lambda^{2})=\frac{1}{2}$ and%
\begin{align*}
p(+_{a},-_{b}|A,B,\lambda^{1})  &  =1,\\
p(-_{a},+_{b}|A,B,\lambda^{2})  &  =1.
\end{align*}
Using $p(A,B)=1$, we see that the stated conditions hold.


At the level presented here, the EPR argument doesn't need any quantum
effects. \ It could be realized entirely classically. \ Von Neumann
\cite[1936]{vonneumann36} gave a nice example of classical action at a distance:

\begin{quote}
Let $S_{1}$ and $S_{2}$ be two boxes. \ One knows that $1,000,000$ years ago
either a white ball had been put into each or a black ball had been placed
into each but one does not know which color the balls were. \ Subsequently one
of the boxes ($S_{1}$) was buried on Earth, the other ($S_{2}$) on Sirius
\ldots. \ Now one digs $S_{1}$ on Earth out, opens it and sees: the ball is
white. \ This action on Earth changes instantaneously the $S_{2}$ statistic on
Sirius \ldots.
\end{quote}

In the QM context, EPR's conclusion was that the theory of QM needed to be \textquotedblleft
completed.\textquotedblright\ \ This leads to the question of whether a
construction like the one we just gave is always possible. \ This then leads
to Bell's Theorem.

\section{Bell}

Bell's Theorem adds crosses to Figure 4.1, as in Figure 5.1.%
\[%
{\parbox[b]{5.5936in}{\begin{center}
\includegraphics[
trim=0.000000in 0.000000in -0.381536in 0.000000in,
natheight=5.010700in,
natwidth=8.083400in,
height=3.3217in,
width=5.5936in
]%
{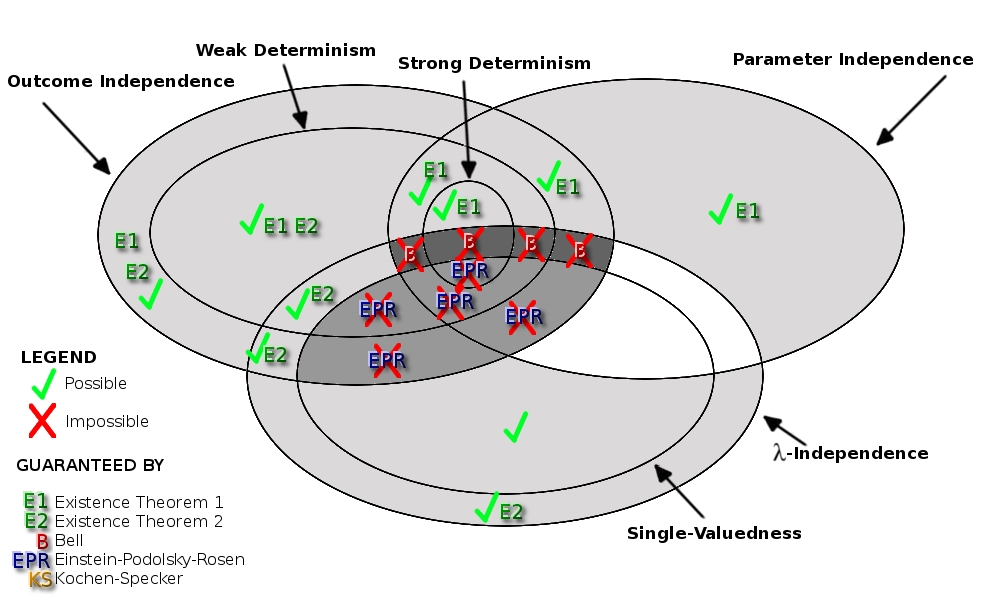}%
\\
Figure 5.1
\end{center}}}%
\]

Once more, our formulation is in probability terms alone. \ In the Bell
experiment, Ann (resp.%
\thinspace
Bob) can make measurements of spin on her (resp.%
\thinspace
his) entangled particle in three directions. \ For each
measurement, the only possible outcome is positive or negative spin. \ If the
measurements are made in the same direction, the results will be
anti-correlated (Figure 5.2). \ Figure 5.3 gives the probabilities of the
different outcomes of the measurements, when these are made in different
directions. \ The probabilities in Figure 5.3 are essentially quantum-mechanical.

\begin{theorem}
[{Bell \cite[1964]{bell64}}]There is an empirical model $(\Psi,q)$ for which
there is no equivalent hidden-variable model $(\Omega,p)$ which satisfies $\lambda$-Independence, Parameter Independence, and Outcome Independence.
\end{theorem}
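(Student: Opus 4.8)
The plan is to exhibit the three-direction empirical model of Figures 5.2 and 5.3 and to derive a Bell-type inequality that any equivalent hidden-variable model satisfying the three listed properties would be forced to obey, and then to check that the quantum probabilities violate it. Concretely, I take $\Psi=\{+_a,-_a\}\times\{+_b,-_b\}\times\{A_1,A_2,A_3\}\times\{B_1,B_2,B_3\}$, and let $q$ assign positive probability to every pair of settings, give perfect anti-correlation when Ann and Bob measure in the same direction (so $q(+_a,+_b|A_i,B_i)=q(-_a,-_b|A_i,B_i)=0$), and give the quantum joint probabilities for unequal directions. Identifying $+$ with $+1$ and $-$ with $-1$, these are chosen so that the correlation $E(A_i,B_j):=\sum_{a,b}a\,b\,q(a,b|A_i,B_j)$ equals $-\cos\theta_{ij}$, where $\theta_{ij}$ is the angle between directions $i$ and $j$.

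Next I would suppose, for contradiction, that there is an equivalent hidden-variable model $(\Omega,p)$ satisfying $\lambda$-Independence, Parameter Independence, and Outcome Independence. By Proposition 2.1 the latter two yield Locality, so wherever the relevant conditioning event has positive probability, $p(a,b|A_i,B_j,\lambda)=p(a|A_i,\lambda)\,p(b|B_j,\lambda)$; and $\lambda$-Independence lets me replace $p(\lambda|A_i,B_j)$ by $p(\lambda)$ uniformly across all nine settings. Writing $f_i(\lambda)=\sum_a a\,p(a|A_i,\lambda)\in[-1,1]$ for Ann and $g_j(\lambda)$ analogously for Bob, equivalence together with these two facts gives $E(A_i,B_j)=\sum_\lambda p(\lambda)\,f_i(\lambda)\,g_j(\lambda)$.

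The crux is to upgrade this probabilistic model to a deterministic one using the equal-setting anti-correlation. Since $E(A_i,B_i)=-1$ is a $p(\lambda)$-weighted average of terms $f_i(\lambda)g_i(\lambda)\in[-1,1]$, each term must equal $-1$ wherever $p(\lambda)>0$; hence $f_i(\lambda)=\pm 1$ and $g_i(\lambda)=-f_i(\lambda)$. Thus the outcome is a two-valued deterministic function of $\lambda$, and $E(A_i,B_j)=-\sum_\lambda p(\lambda)\,f_i(\lambda)\,f_j(\lambda)$ for all $i,j$. From here the original Bell inequality follows by a short algebraic step: using $f_j(\lambda)^2=1$ to write $f_1 f_3-f_1 f_2=f_1 f_3\bigl(1-f_2 f_3\bigr)$, one gets $|E(A_1,B_2)-E(A_1,B_3)|\le\sum_\lambda p(\lambda)\bigl(1-f_2(\lambda)f_3(\lambda)\bigr)=1+E(A_2,B_3)$.

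Finally I would choose the three directions so that $\theta_{12}=\theta_{23}=60^\circ$ and $\theta_{13}=120^\circ$, giving $E(A_1,B_2)=E(A_2,B_3)=-\tfrac12$ and $E(A_1,B_3)=+\tfrac12$; the inequality then reads $1\le\tfrac12$, a contradiction, which completes the proof. The main obstacle is the determinism step combined with the bookkeeping of conditioning on positive-probability events: I must check that $\lambda$-Independence legitimately yields $p(\lambda|A_i,B_j)=p(\lambda)$ for every setting and that the equal-setting argument can be applied $\lambda$ by $\lambda$. Once the outcomes are forced to be deterministic functions of $\lambda$, the inequality and its quantum violation are routine.
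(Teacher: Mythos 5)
Your proposal is correct, and it shares the paper's overall architecture: the same three-setting, two-outcome empirical model with perfect anti-correlation at equal settings, Proposition 2.1 to convert Outcome plus Parameter Independence into the factorization $p(a,b|A_i,B_j,\lambda)=p(a|A_i,\lambda)p(b|B_j,\lambda)$, $\lambda$-Independence to replace $p(\lambda|A_i,B_j)$ by $p(\lambda)$ across all nine (positive-probability) settings, and the crucial step of using $q(+_a,+_b|A_i,B_i)=q(-_a,-_b|A_i,B_i)=0$ to force the responses to be deterministic $\pm 1$-valued functions of $\lambda$ with $g_i=-f_i$. Where you genuinely diverge is in the endgame, and the two routes are not interchangeable. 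The paper places all three directions mutually at $120^\circ$, so that $q(+_a,+_b|A_i,B_j)=q(-_a,-_b|A_i,B_j)=\tfrac38$ for all $i\neq j$, translates these into $p(K_i\cap L_j)+p(L_i\cap K_j)=\tfrac34$, and runs a d'Espagnat-style counting argument over the eight atoms generated by $K_1,K_2,K_3$, reaching a total probability of $\tfrac98>1$ (a Wigner-type inequality). You instead take directions at $0^\circ,60^\circ,120^\circ$ and derive Bell's original correlation inequality $|E(A_1,B_2)-E(A_1,B_3)|\le 1+E(A_2,B_3)$ via the identity $f_1f_3-f_1f_2=f_1f_3(1-f_2f_3)$, violated as $1\le\tfrac12$. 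Note that your empirical model and the paper's are different: the paper's fully symmetric model has $E(A_i,B_j)=+\tfrac12$ for all $i\neq j$ and so \emph{satisfies} your inequality, while your asymmetric angles are essential to the violation; conversely your model does not satisfy the paper's symmetric constraint (5.5) for all pairs. Your route is closer to Bell's 1964 argument and plugs into the standard correlation-inequality machinery; the paper's buys a purely set-theoretic counting contradiction with a single symmetric table of rational probabilities. Both are complete proofs of the stated theorem, with the only bookkeeping points you flag (well-definedness of $p(a|A_i,\lambda)$ and the $\lambda$-by-$\lambda$ determinism argument on the support $M$) handled correctly.
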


Another phrasing (using Proposition 2.1):\ There is no equivalent
hidden-variable model which satisfies $\lambda$-Independence and Locality.%
\medskip

\begin{proof}
We let%
\[
\Psi=\{+_{a},-_{a}\}\times\{+_{b},-_{b}\}\times\{A_{1},A_{2},A_{3}%
\}\times\{B_{1},B_{2},B_{3}\},
\]
and define $q$ as in Figures 5.2 and 5.3, with $q(A_{i},B_{j})=\frac{1}{9}$
for all $i,j$.%

\vspace{.5in}
\begin{center}
\begin{tabular}{l||c|c||}
 &$+_b$&$-_b$\\ \hline \hline&~& \\
$+_a$&0&$\frac{1}{2}$ \\  &~& \\ \hline &~& \\
$-_a$&$\frac{1}{2}$&$0$ \\ &~& \\\hline\hline
\end{tabular} $\qquad \qquad ~~~~~~q(\cdot, \cdot |A_i,B_i)$
\end{center}
\begin{center} Figure 5.2~~~~~~~~~~~~~~~~~~~~~~~~~~~ \end{center}

\vspace{.2in}

\begin{center}
\begin{tabular}{l||c|c||}
 &$+_b$&$-_b$\\ \hline \hline&~& \\
$+_a$&$\frac{3}{8}$&$\frac{1}{8}$ \\  &~& \\ \hline &~& \\
$-_a$&$\frac{1}{8}$&$\frac{3}{8}$\\ &~& \\\hline\hline
\end{tabular} $\qquad q(\cdot, \cdot |A_i,B_j)$ for $j\neq i$
\end{center}
\begin{center} Figure 5.3~~~~~~~~~~~~~~~~~~~~~~~~~ \end{center}
\vspace{.2in}



Now suppose, contra hypothesis, there is an equivalent hidden-variable model
$(\Omega,p)$ satisfying $\lambda$-Independence, Parameter Independence, and Outcome Independence.

Fix an $i$. \ By assumption, $p(A_{i},B_{i})>0$, since $q(A_{i},B_{i})>0$.
\ Using Figure 5.2, we have%
\begin{multline*}
0=q(+_{a},+_{b}|A_{i},B_{i})=\sum\limits_{\{\lambda:p(A_{i},B_{i}%
,\lambda)>0\}}p(+_{a},+_{b}|A_{i},B_{i},\lambda)p(\lambda|A_{i},B_{i})=\\
\sum\limits_{\{\lambda:p(A_{i},B_{i},\lambda)>0\}}p(+_{a},+_{b}|A_{i}%
,B_{i},\lambda)p(\lambda)=\\
\sum\limits_{\{\lambda:p(A_{i},B_{i},\lambda)>0\}}p(+_{a}|A_{i},\lambda
)p(+_{b}|B_{i},\lambda)p(\lambda),
\end{multline*}
where the second line uses $\lambda$-Independence and the third line uses Parameter
Independence, Outcome Independence, and Proposition 2.1. \ Using
$p(A_{i},B_{i})>0$ and $\lambda$-Independence again, we have $p(A_{i},B_{i},\lambda)>0
$ if and only if $p(\lambda)>0$. \ Let $M=\{\lambda:p(\lambda)>0\}$. \ Then,%
\begin{equation}
p(+_{a}|A_{i},\lambda)\times p(+_{b}|B_{i},\lambda)=0 \tag{5.1}%
\end{equation}
whenever $\lambda\in M$.

A similar argument using $q(-_{a},-_{b}|A_{i},B_{i})=0$ establishes%
\begin{equation}
p(-_{a}|A_{i},\lambda)\times p(-_{b}|B_{i},\lambda)=0 \tag{5.2}%
\end{equation}
whenever $\lambda\in M$.

Using (5.1) and (5.2), we see that for each $i$, there are disjoint sets
$K_{i},L_{i}\subseteq\Lambda$, with $K_{i}\cup L_{i}=M$, such that%
\begin{equation}%
\begin{array}
[c]{ll}%
p(+_{a}|A_{i},\lambda)=1\text{ and }p(-_{b}|B_{i},\lambda)=1 & \text{when
}\lambda\in K_{i},\\
p(-_{a}|B_{i},\lambda)=1\text{ and }p(+_{b}|B_{i},\lambda)=1 & \text{when
}\lambda\in L_{i}.
\end{array}
\tag{5.3}%
\end{equation}

Similar to above, observe that%
\begin{equation}
q(+_{a},+_{b}|A_{i},B_{j})=\sum\limits_{M}p(+_{a}|A_{i},\lambda)p(+_{b}%
|B_{j},\lambda)p(\lambda). \tag{5.4}%
\end{equation}

Using (5.3) (for $i$ and $j$) in (5.4) we get%
\[
q(+_{a},+_{b}|A_{i},B_{j})=p(K_{i}\cap L_{j}).
\]

A parallel argument yields%
\[
q(-_{a},-_{b}|A_{i},B_{j})=p(L_{i}\cap K_{j}).
\]

Now use Figure 5.3 to get%
\begin{equation}
p(K_{i}\cap L_{j})+p(L_{i}\cap K_{j})=\frac{3}{4} \tag{5.5}%
\end{equation}
whenever $i\neq j$.%
\begin{center}
\includegraphics[
natheight=7.499600in,
natwidth=9.999800in,
height=2.738in,
width=2.8266in
]
{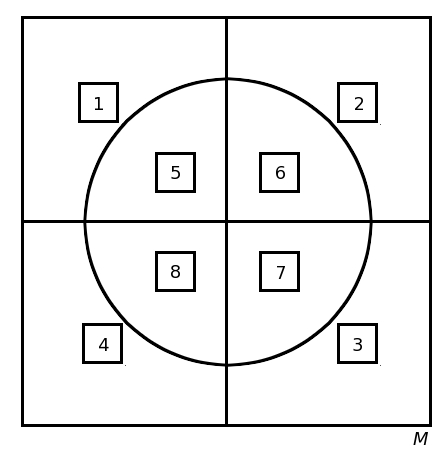}
\\
Figure 5.4
\end{center}

Refer to Figure 5.4 (similar to figures in d'Espagnat \cite[1979]%
{despagnat79}), and let%
\begin{align*}
K_{1}  &  =\fbox{1}\cup\fbox{4}\cup\fbox{5}\cup\fbox{8},\\
L_{1}  &  =\fbox{2}\cup\fbox{3}\cup\fbox{6}\cup\fbox{7},\\
K_{2}  &  =\fbox{1}\cup\fbox{2}\cup\fbox{5}\cup\fbox{6},\\
L_{2}  &  =\fbox{3}\cup\fbox{4}\cup\fbox{7}\cup\fbox{8},\\
K_{3}  &  =\fbox{1}\cup\fbox{2}\cup\fbox{3}\cup\fbox{4},\\
L_{3}  &  =\fbox{5}\cup\fbox{6}\cup\fbox{7}\cup\fbox{8}.
\end{align*}

Now (5.5) for $(i,j)=(1,2)$, $(2,3)$, and $(3,1)$ respectively, yields%
\begin{align*}
p(K_{1}\cap L_{2})+p(L_{1}\cap K_{2})  &  =\frac{3}{4},\\
p(K_{2}\cap L_{3})+p(L_{2}\cap K_{3})  &  =\frac{3}{4},\\
p(K_{3}\cap L_{1})+p(L_{3}\cap K_{1})  &  =\frac{3}{4},
\end{align*}
or%
\begin{align}
p(\fbox{4})+p(\fbox{8})+p(\fbox{2})+p(\fbox{6})  &  =\frac{3}{4},\tag{5.6}\\
p(\fbox{5})+p(\fbox{6})+p(\fbox{3})+p(\fbox{4})  &  =\frac{3}{4},\tag{5.7}\\
p(\fbox{2})+p(\fbox{3})+p(\fbox{5})+p(\fbox{8})  &  =\frac{3}{4}. \tag{5.8}%
\end{align}

Adding (5.6)-(5.8) gives%
\[
2\times(p(\fbox{2})+p(\fbox{3})+p(\fbox{4})+p(\fbox{5})+p(\fbox{6}%
)+p(\fbox{8}))=\frac{9}{4},
\]
or%
\[
p(\fbox{2})+p(\fbox{3})+p(\fbox{4})+p(\fbox{5})+p(\fbox{6}+p(\fbox{8}%
)=\frac{9}{8},
\]
which is impossible.
\end{proof}%

\medskip
Can we drop any of the conditions of Bell's Theorem? \ By Theorem 3.1, we
cannot drop $\lambda$-Independence. \ By Theorem 3.2, we cannot drop Parameter Independence.

For the Bell empirical model, we also cannot drop Outcome Independence. \ To
see this, let $\Lambda=\{\lambda\}$. \ Then $\lambda$-Independence is satisfied.
\ Define $p$ on $\Psi\times\{\lambda\}$ from $q$ on $\Psi$, as in Remark 3.1.
\ We then have%
\begin{align*}
p(+_{a}|A_{i},B_{i},\lambda)  &  =q(+_{a}|A_{i},B_{i})=\frac{1}{2}%
=q(+_{a}|A_{i},B_{j})=p(+_{a}|A_{i},B_{j},\lambda),\\
p(+_{b}|A_{i},B_{i},\lambda)  &  =q(+_{b}|A_{i},B_{i})=\frac{1}{2}%
=q(+_{b}|A_{j},B_{i})=p(+_{b}|A_{j},B_{i},\lambda),
\end{align*}
so that Parameter Independence is satisfied. \ Of course, Outcome Independence
fails, as it must. \ For example:%
\[
p(+_{a}|-_{b},A_{i},B_{i},\lambda)=1\neq\frac{1}{2}=p(+_{a}|A_{i}%
,B_{i},\lambda).
\]

By contrast, the Kochen-Specker Theorem produces an impossibility even without
Outcome Independence.

\section{Kochen-Specker}

The Kochen-Specker \cite[1967]{kochen-specker67} no-go result adds crosses to
Figure 5.1, to give a complete picture as in Figure 1.1.%

At the physical level, the Kochen-Specker experiment differs from
those in the past two sections in considering measurements on only
one particle. There are many presentations of Kochen-Specker, of
course. We follow Cabello, Estebaranz, and Garc\`{\i}a-Alcaine
\cite[1996]{cabello-estebaranz-garciaalcaine96}, a simple treatment
which results in the $4 \times 9$ array of Table 6.1 (
also presented in Held \cite[2000]{held00}). For various
tuples of four orthogonal directions in 4-space (from a total of 18
directions), we ask whether or not the particle has spin in each of
these directions. \ In each case, the answer will be that we get
three directions without spin and only one direction with spin.


\vspace{.2in}
\begin{center}
\begin{tabular}{||l||c|c|c|c|c|c|c|c|c||}
\hline \hline
$A$&$E_{1 }$&$E_{1 }$&$E_{8 }$&$E_{ 8}$&$E_{ 2}$&$E_{9 }$&$E_{16}$&$E_{16}$&$E_{17}$ \\ \hline
$B$&$E_{2 }$&$E_{5 }$&$E_{9 }$&$E_{11}$&$E_{5 }$&$E_{11}$&$E_{17}$&$E_{18}$&$E_{18}$ \\ \hline
$C$&$E_{3 }$&$E_{6 }$&$E_{3 }$&$E_{7 }$&$E_{13}$&$E_{14}$&$E_{4 }$&$E_{6 }$&$E_{13}$ \\ \hline
$D$&$E_{4 }$&$E_{7 }$&$E_{10}$&$E_{12}$&$E_{14}$&$E_{15}$&$E_{10}$&$E_{12}$&$E_{15}$ \\
\hline\hline
\end{tabular}
\end{center}
\begin{center}Table 6.1 \end{center}

\vspace{.2in}

To state Kochen-Specker in our probabilistic framework, we will need to adapt
the concept of exchangeability from probability theory (de Finetti
\cite[1937]{definetti37}, \cite[1972]{definetti72}). \ To give our definition,
we consider the special case where the spaces of possible measurements are all
the same, as are the spaces of possible outcomes:%
\[
\{A,\ldots\}=\{B,\ldots\}=\cdots=\{X_{1},X_{2},\ldots,X_{m}\},
\]%
\[
\{a,\ldots\}=\{b,\ldots\}=\cdots=\{x_{1},x_{2},\ldots,x_{n}\},
\]
for integers $m,n$. \ We will consider a permutation map $\pi$:%
\begin{align*}
(A,B,\ldots)  &  \mapsto(\pi(A),\pi(B),\ldots),\\
(a,b,\ldots)  &  \mapsto(\pi(a),\pi(b),\ldots).
\end{align*}
Note that we use $\pi$ twice (despite the different domains), because we want
to consider the same permutation on the two sequences.

\begin{definition}
An empirical model $(\Psi,q)$ satisfies \textbf{Exchangeability }if for any
indices $i_{1},i_{2},\ldots\in\{1,2,\ldots,m\}$ and $j_{1},j_{2},\ldots
\in\{1,2,\ldots,n\}$,%
\[
q(A=X_{i_{1}},B=X_{i_{2}},\ldots)>0\text{ if and only if }q(\pi(A)=X_{i_{1}%
},\pi(B)=X_{i_{2}},\ldots)>0,
\]
for any permutation $\pi$, and when both are non-zero,%
\begin{multline*}
q(a=x_{j_{1}},b=x_{j_{2}},\ldots|A=X_{i_{1}},B=X_{i_{2}},\ldots)=\\
q(\pi(a)=x_{j_{1}},\pi(b)=x_{j_{2}},\ldots|\pi(A)=X_{i_{1}},\pi(B)=X_{i_{2}%
},\ldots).
\end{multline*}

\end{definition}

In words, the requirement is that if we swap any number of measurements, then,
as long as we swap the outcomes in the same way, the overall probability is
unchanged. \ Thus, let $q$ be the probability that Ann gets the outcome
$x_{j_{1}}$ and Bob gets the outcome $x_{j_{2}}$, if Ann performs measurement
$X_{i_{1}}$ on her particle and Bob performs measurement $X_{i_{2}}$ on his
particle. \ Let $q^{\prime}$ be the probability that Ann gets the outcome
$x_{j_{2}}$ and Bob gets the outcome $x_{j_{1}}$, if Ann performs measurement
$X_{i_{2}}$ on her particle and Bob performs measurement $X_{i_{1}}$ on his
particle. Exchangeability says that $q^{\prime}=q$. \ Likewise, for several
measurements on a single particle. \ This is similar to exchangeability \`{a}
la de Finetti, though with a conditioning component.

Exchangeability might come from physical arguments. \ For example, the Bell
model (Figures 5.2 and 5.3) satisfies Exchangeability. \ (This reflects the
underlying physical fact that only the angle between the two measurements matters.)

\begin{theorem}
[{Kochen-Specker \cite[1967]{kochen-specker67}}]There is an empirical model
$(\Psi,q)$ for which there is no equivalent hidden-variable model that
satisfies $\lambda$-Independence and Parameter Independence.
\end{theorem}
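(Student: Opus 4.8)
The plan is to invoke Proposition 2.2 in contrapositive form. Since any hidden-variable model satisfying $\lambda$-Independence and Parameter Independence forces every equivalent empirical model to satisfy Non-Contextuality, it suffices to exhibit a single empirical model $(\Psi,q)$---built on the $18$ directions $E_1,\ldots,E_{18}$ and the $9$ four-element contexts (columns) of Table 6.1---that \emph{fails} Non-Contextuality. The arithmetic backbone is the classical Kochen-Specker counting fact: there is no map $v\colon\{E_1,\ldots,E_{18}\}\to\{0,1\}$ assigning exactly one $1$ to each of the $9$ contexts. Indeed, summing the relation $\sum_{E_i\in C}v(E_i)=1$ over all $9$ contexts gives $9$; but since each direction lies in exactly two contexts, the same double sum equals $2\sum_i v(E_i)$, an even number, while $9$ is odd.

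Concretely, I would let $\Psi$ carry four measurement slots $A,B,C,D$ ranging over the $18$ directions, restrict the settings of positive $q$-probability to the $9$ contexts (together with their slot-permutations, so that $q$ is Exchangeable and ``measuring a context'' is well defined), and make each context's outcome \emph{definite}: exactly one designated direction of the context registers spin and the other three do not. The counting fact then does the work. If $q$ satisfied Non-Contextuality, the per-context spin-designations would be globally consistent and would define a map $v$ of the forbidden kind; hence some direction $E_i$ must sit in two contexts $C,C'$ with $q(\,\mathrm{spin}\mid E_i,C)=1\neq 0=q(\,\mathrm{spin}\mid E_i,C')$, so $q$ violates Non-Contextuality. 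Proposition 2.2 (contrapositive) then yields the theorem: no equivalent hidden-variable model can have both $\lambda$-Independence and Parameter Independence.

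The step I expect to be delicate is the construction of $q$, not the counting. The parity obstruction only bites once outcomes are made \emph{definite} per context: a probabilistic ``one spin per context'' model with symmetric marginals would actually \emph{satisfy} Non-Contextuality (take $v\equiv\tfrac14$, which solves $\sum_{E_i\in C}v(E_i)=1$ on every context), and Proposition 2.2 would then give no contradiction at all. So the real content is to pin the shared directions into conflicting marginals, and to verify that the resulting $q$ is a legitimate, Exchangeable empirical model for which no non-contextual marginal assignment exists---which is exactly what the odd-versus-even count guarantees. As a sanity check on tightness, I would confirm that dropping either hypothesis is consistent with the existence results, i.e.\ Theorem 3.1 (Strong Determinism, hence Parameter Independence) for Parameter Independence alone and Theorem 3.2 for $\lambda$-Independence alone, matching Figure 1.1.
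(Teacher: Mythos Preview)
Your proposal is correct and follows essentially the same route as the paper: build a definite-outcome empirical model on the $18$ directions and $9$ contexts of Table~6.1 with Exchangeability, invoke Proposition~2.2 to reduce to Non-Contextuality, and then use the parity obstruction (each $E_i$ appears an even number of times across an odd number of columns) to rule out any consistent global $0$/$1$ valuation. The paper spells out in detail the step you flag as delicate---how Exchangeability plus Non-Contextuality propagates the designation of a direction from one slot/column to another (the passage from (6.5) to (6.6))---but your sketch identifies it correctly.
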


Kochen-Specker demonstrated the existence of a QM model that fails
Non-Contextuality:\ Whether or not their particle has spin in a certain
direction is dependent on which other directions are also measured. \ The
property of spin for such a particle does not stand alone. \ As the proof
makes clear, Theorem 6.1 is really a corollary to their result.%
\medskip

\begin{proof}
Consider an empirical model where%
\[
\{A,\ldots\}=\{B,\ldots\}=\{C,\ldots\}=\{D,\ldots\}=\{E_{1},E_{2}%
,\ldots,E_{18}\},
\]%
\[
\{a,\ldots\}=\{b,\ldots\}=\{c,\ldots\}=\{d,\ldots\}=\{0,1\}.
\]
Exchangeability is assumed to hold, and $q$ assigns positive probability to
each of the nine tuples of measurement settings.

Finally, for any column, the empirical model has the property that precisely
one of the following holds:%
\begin{align}
q(1,0,0,0|E_{i_{1}},E_{i_{2}},E_{i_{3}},E_{i_{4}})  &  =1,\tag{6.1}\\
q(0,1,0,0|E_{i_{1}},E_{i_{2}},E_{i_{3}},E_{i_{4}})  &  =1,\tag{6.2}\\
q(0,0,1,0|E_{i_{1}},E_{i_{2}},E_{i_{3}},E_{i_{4}})  &  =1,\tag{6.3}\\
q(0,0,0,1|E_{i_{1}},E_{i_{2}},E_{i_{3}},E_{i_{4}})  &  =1. \tag{6.4}%
\end{align}

Now suppose, contra hypothesis, that there is an equivalent hidden-variable
model satisfying $\lambda$-Independence and Parameter Independence. \ By Proposition
2.2, the above empirical model then satisfies Non-Contextuality.

Next, take, say, the first column. \ If%
\begin{equation}
q(0,1,0,0|E_{1},E_{2},E_{3},E_{4})=1, \tag{6.5}%
\end{equation}
then certainly%
\[
q(b=1|E_{1},E_{2},E_{3},E_{4})=1.
\]

Since $(E_{2},E_{5},E_{13},E_{14})$ is non-null, so is $(E_{5},E_{2}%
,E_{13},E_{14})$, by Exchangeability. \ Using Non-Contextuality, we therefore
have%
\[
q(b=1|E_{5},E_{2},E_{13},E_{14})=1,
\]
from which, by Exchangeability again,%
\[
q(a=1|E_{2},E_{5},E_{13},E_{14})=1.
\]

Now use (6.1)-(6.4) to get%
\begin{equation}
q(1,0,0,0|E_{2},E_{5},E_{13},E_{14})=1, \tag{6.6}%
\end{equation}
which tells us about the fifth column.

We therefore get a coloring problem: We try to color precisely one entry in
each column--corresponding to the measurement that yields a $1$. \ For
example, suppose we color the entry $E_{2}$ in the first column--corresponding
to (6.5).\ \ Then (6.6) tells us that we must color the entry $E_{2}$ in the
fifth column. \ However, this is impossible. \ Each $E_{i}$ appears an even
number of times in Table 6.1, and there is an odd number of columns. \ Thus,
the table cannot be colored.
\end{proof}

\section{Other No-Go Theorems}%

There are many important papers on the no-go question not touched
upon here. These include Fine \cite[1982]{fine82a},
\cite[1982]{fine82b}, Greenberger, Horne, and Zeilinger \cite[1989]{GHZ89}, Malley and Fine \cite[2005]{malley-fine05},
Mermin \cite[1990]{mermin90}, \cite[1993]{mermin93}, Peres \cite[1990]{peres90},
\cite[1991]{peres91}, and Szabo and Fine
\cite[2002]{szabo-fine02}. Again, our purpose is not to survey the
literature. Rather, it is to give a complete picture of Figure 1.1
and all its 21 regions. As Figure 1.1 shows, just the three basic
no-go theorems are needed for the six properties that we present.


The absence of Gleason's Theorem (\cite[1957]{gleason57})
from our paper is a consequence of our choice not to impose any structure
on our spaces (refer back to Footnote 4). In particular, we do not work in
Hilbert space. Of course, Gleason's Theorem immediately
implies the existence of the Kochen-Specker QM
model (which we used in our Theorem 6.1).%

The recent no-go theorem of Conway and Kochen \cite[2006]{conway-kochen06}
generalizes Kochen-Specker by relaxing Parameter
Independence. \ Consider a two-particle system. \ The requirement is that,
conditional on the value of the hidden variable, the outcome of any particular
measurement Ann makes on her particle may depend (probabilistically) on the
other measurements she makes but not on the measurements Bob makes on his
particle. \ We could accommodate this result by adding a seventh property--a generalized
parameter independence--to our six, but refrain from pursuing this extension here.


Finally, we note the connection to Bohmian mechanics (Bohm
\cite[1952]{bohm52}). \ D\"{u}rr-Goldstein-Zangh\`{\i} \cite[2004,
p.993]{durr-goldstein-zanghi04} explain: \textquotedblleft In Bohmian
mechanics the result obtained at one place at any given time will in fact
depend upon the choice of measurement simultaneously performed at the other
place.\textquotedblright\ \ Indeed, Theorem 3.2 says that provided one is
prepared to give up Parameter Independence, one can reproduce any empirical
model--under $\lambda$-Independence and Weak Determinism. \ Theorem 3.1 says that if one is prepared to give up
$\lambda$-Independence, one can get even Strong Determinism. \ In a sense, then, these results `predict' the
possibility of Bohmian mechanics--though not its specific content, of course.%

\newpage

\end{document}